\newcommand*\samethanks[1][\value{footnote}]{\footnotemark[#1]}
\g@addto@macro\bfseries{\boldmath}
\g@addto@macro\mdseries{\unboldmath}
\g@addto@macro\normalfont{\unboldmath}
\g@addto@macro\rmfamily{\unboldmath}
\g@addto@macro\upshape{\unboldmath}
\renewcommand*{\multicitedelim}{\addcomma\space}
    \newlength{\temp@x}%
    \newlength{\temp@y}%
    \newlength{\temp@w}%
    \newlength{\temp@h}%
    \def\my@coords#1#2#3#4{%
      \setlength{\temp@x}{#1}%
      \setlength{\temp@y}{#2}%
      \setlength{\temp@w}{#3}%
      \setlength{\temp@h}{#4}%
      \adjustlengths{}%
      \my@pdfliteral{\strip@pt\temp@x\space\strip@pt\temp@y\space\strip@pt\temp@w\space\strip@pt\temp@h\space re}}%
      \def\my@pdfliteral#1{\pdfliteral page{#1}}% I don't know why % this command...
      \def\adjustlengths{}%
      \def\my@pdfliteral #1{}% isn't equivalent to this one
      \def\adjustlengths{\setlength{\temp@h}{-\temp@h}\addtolength{\temp@y}{1in}\addtolength{\temp@x}{-1in}}%
    \def\Hy@colorlink#1{%
      \begingroup
        \ifHy@ocgcolorlinks
          \def\Hy@ocgcolor{#1}%
          \my@pdfliteral{q}%
          \my@pdfliteral{7 Tr}% Set text mode to clipping-only
        \else
          \HyColor@UseColor#1%
        \fi
    }%
    \def\Hy@endcolorlink{%
      \ifHy@ocgcolorlinks%
        \my@pdfliteral{/OC/OCPrint BDC}%
        \my@coords{0pt}{0pt}{\pdfpagewidth}{\pdfpageheight}%
        \my@pdfliteral{F}% Fill clipping path (the url's text) with
                           % current color
        %
        \my@pdfliteral{EMC/OC/OCView BDC}%
        \begingroup%
          \expandafter\HyColor@UseColor\Hy@ocgcolor%
          \my@coords{0pt}{0pt}{\pdfpagewidth}{\pdfpageheight}%
          \my@pdfliteral{F}% Fill clipping path (the url's text)
                             % with \Hy@ocgcolor
        \endgroup%
        \my@pdfliteral{EMC}%
        \my@pdfliteral{0 Tr}% Reset text to normal mode
        \my@pdfliteral{Q}%
      \fi
      \endgroup
    }%
\newcommand{\antonis}[1]{\todo[linecolor=orange!50!black,backgroundcolor=orange!25,bordercolor=orange!50!black]{\scriptsize \textbf{AS:} #1}}
\colorlet{DarkRed}{red!50!black}
\colorlet{DarkGreen}{green!50!black}
\colorlet{DarkBlue}{blue!50!black}
\declaretheorem[numberwithin=section]{theorem}
\declaretheorem[numberlike=theorem]{lemma}
\declaretheorem[numberlike=theorem]{observation}
\DeclareMathOperator{\poly}{poly}
\DeclareMathOperator{\Z}{\mathbb Z}
\title{Fast Algorithms for Energy Games in Special Cases}
\author{
  Sebastian Forster\thanks{Department of Computer Science, University of Salzburg, Austria. This work is supported by the Austrian Science Fund (FWF): P 32863-N. This project has received funding from the European Research Council (ERC) under the European Union's Horizon 2020 research and innovation programme (grant agreement No~947702).}
  \and
  Antonis Skarlatos\samethanks
  \and
  Tijn de Vos \samethanks
}
\date{}
\begin{document}
\maketitle
\begin{abstract}
In this paper, we study algorithms for special cases of energy games, a class of turn-based games on graphs that show up in the quantitative analysis of reactive systems. In an energy game, the vertices of a weighted directed graph belong either to Alice or to Bob. A token is moved to a next vertex by the player controlling its current location, and its energy is changed by the weight of the edge. 
Given a fixed starting vertex and initial energy, Alice wins the game if the energy of the token remains nonnegative at every moment. 
If the energy goes below zero at some point, then Bob wins. The problem of determining the winner in an energy game lies in $\mathsf{NP} \cap \mathsf{coNP}$.
It is a long standing open problem whether a polynomial time algorithm for this problem exists. 

We devise new algorithms for three special cases of the problem.
The first two results focus on the single-player version, where either Alice or Bob controls the whole game graph.
We develop an $\tilde{O}(n^\omega W^\omega)$ time algorithm for a game graph controlled by Alice, by providing a reduction to the All-Pairs 
Nonnegative Prefix Paths problem (APNP), where $W$ is
the maximum absolute value of any edge weight
and $\omega$ is the best exponent for matrix multiplication. Thus we study the APNP problem separately, for which we develop an $\tilde{O}(n^\omega W^\omega)$ time algorithm. For both problems, we improve over the state of the art of $\tilde O(mn)$ for small $W$. %\antonis{Maybe provide the region defined by m, n, W, that we improve?}
For the APNP problem, we also provide a conditional lower bound which states that there is no $O(n^{3-\epsilon})$ time algorithm for any $\epsilon > 0$, unless the APSP Hypothesis fails.
For a game graph controlled by Bob, we obtain a near-linear time algorithm. 
Regarding our third result, we present a variant of the value iteration algorithm, and we prove that it gives an $O(mn)$ time 
algorithm for game graphs without negative cycles,
which improves a previous upper bound. The all-Bob algorithm is randomized, all other algorithms are deterministic.
\end{abstract}

% \newpage
% \tableofcontents
% \newpage

\newpage

\section{Introduction}
%\tijn{explain $\tilde O$}
Energy games belong to a class of turn-based games on graphs that show up in the quantitative analysis of reactive systems.
A game graph can possibly represent a scheduling problem, where vertices are the configurations of the system
and edges carry positive or negative values representing the evolution of resources. Thus, in this model
resources can be consumed or produced. 
The energy games problem has been introduced in the early 2000s~\cite{ChakrabartiAHS03,BouyerFLMS08}, but also has been implicitly studied before due to its ties to mean-payoff games~\cite{Gurvich88}.
Energy games have applications in, among others, computer aided verification and automata theory~\cite{ChakrabartiAHS03, BloemCHJ09,CernyCHRS11}, and in online and streaming problems~\cite{ZwickP96}.
From its computational perspective, 
the problem of determining the winner in an energy game lies in $ \mathsf{NP} \cap \mathsf{coNP} $.
It is an intriguing open problem whether a polynomial time algorithm for this problem exists.

An energy game is played by two players, say Alice and Bob, on a \emph{game graph}, which is a weighted directed graph such that each 
vertex is either controlled by Alice or Bob.
The game starts by placing a token with an initial energy on a starting vertex. The game is played in rounds, and 
every time the token is located at a vertex controlled by Alice, then Alice chooses the next 
location of the token among the outgoing edges, otherwise Bob chooses the next move.
The token has an \emph{energy level} (in the beginning this is equal to the initial energy)
and every time it traverses an edge, the weight
of the edge is added to the energy level (a negative weight amounts to a reduction of the energy level).
The objectives of the players are as follows: Alice wants to minimize the initial energy that is necessary to keep the energy level nonnegative at all times, whereas Bob wants to maximize this value (and possibly drive it to $ \infty$).
The computational problem is to determine for each vertex the minimum initial energy such that Alice can guarantee against all choices of Bob that the energy level always stays nonnegative. 

Energy games are a generalization of parity games~\cite{Jurdzinski98,BrimCDGR11}, polynomial-time equivalent to mean-payoff games~\cite{BouyerFLMS08,BrimCDGR11}, 
and a special case of simple stochastic games~\cite{ZwickP96}.
Recent progress on parity games yielded several quasipolynomial time algorithms~\cite{CaludeJKLS22}, but the corresponding techniques seem to not carry over to energy and mean-payoff games~\cite{FijalkowGO20}. 
Consequently, the complexity of energy games is still ``stuck'' at pseudopolynomial~\cite{BrimCDGR11} or subexponential time~\cite{BjorklundV07}.
Hence, in this paper we focus on interesting special cases (which are non-trivial problems) that admit fast algorithms.
Two of these cases are game graphs where all vertices are controlled by one player, and the third case are game graphs with no negative
cycles. 

\paragraph{All-Pairs Nonnegative Prefix Paths.}
We also study another reachability problem with energy constraints \cite{HMR19, dorfman2023optimal}, the \emph{All-Pairs Nonnegative Prefix Paths (APNP) problem}. 
In this problem, the goal is to determine for
every pair of vertices whether there exists a path $\pi$ such that the weight of each prefix of $\pi$ is nonnegative. 
We use this problem to obtain
the result for the special case where Alice controls the whole game graph,
since the two problems are closely related.
Dorfman, Kaplan, Tarjan, and Zwick~\cite{dorfman2023optimal} solve the more general problem, where for each pair of vertices the goal is to find the path $\pi$ of maximum weight among all options. This problem naturally generalizes APSP, and they solve it in $O(mn+n^2\log n)$ time. 
%\antonis{Why does it generalize APSP? I think we should define APSP here?}

\paragraph{Energy Games.}
%\tijn{two more approaches to consider: (1) "An Optimal Strategy
%  Improvement Algorithm for Solving Parity and Payoff Games" 2008, where a
%  symmetric strategy improvement algorithm for parity games has been developed,
%  which can also be immediately applied to solve mean-payoff games; (2) "Solving
%  Mean-Payoff Games via Quasi Dominions" 2020, where an algorithm structurally
%  close to the approach of [20] has been studied.
%}
The state-of-the-art algorithms for the energy games are either deterministic with running time 
$ O\left(\min (mnW, mn2^{n/2} \log W)\right) 
$~\cite{BrimCDGR11,DorfmanKZ19} or randomized with subexponential running time $ 2^{O(\sqrt{n\log n})}$~\cite{BjorklundV07}. Special cases of the 
energy games have been studied by Chatterjee, Henzinger, Krinninger, and Nanongkai~\cite{ChatterjeeHKN14}.
They present a variant of the value iteration algorithm of~\cite{BrimCDGR11} with running time $O(m|A|)$, where $A$ is a sorted list containing all possible minimum
energy values. This does not improve the general case, as $A$ in the worst case is the list $\{0, 1, \dots, nW, \infty\}$. However, it does give a faster running time if the weights adhere to certain restrictions. Moreover, they develop a scaling algorithm with running time 
$O(mn\log W (\log \frac{W}{P} + 1) + mn \frac{W}{P}) $, where $ P \in \{ \frac{1}{n}, \ldots, W \} $ is a lower bound on the \emph{penalty} of the game. 

For the special case where there are no negative cycles in the game graph, the penalty can
be set to $W$, and the scaling algorithms of \cite{ChatterjeeHKN14} solves the problem in $ O(mn \log W)$ time.
For another special case where the whole game graph is controlled by Alice, 
Brim and Chaloupka \cite{brim2012using} provided an $\tilde{O}(mn)$\footnote{We write $\tilde{O}(f)$ for $O(f \poly\log f)$.} running time algorithm as a subroutine for the two-players version.

\paragraph{Mean-Payoff Games.}
In a mean-payoff game, the objective of Alice is to maximize the average of the weights of edges traversed so far, whereas Bob's objective is to minimize this mean payoff.
It is well known that any energy games instance can be solved by solving $O(n\log (nW))$ mean-payoff games~\cite{BouyerFLMS08}, and any mean-payoff game instance can be solved by solving $ O (\log (nW)) $ energy games with maximal weight~$nW$~\cite{BrimCDGR11}\footnote{Unless stated otherwise, we always consider the versions of the games where we compute the mean-payoff value/minimum initial energy for \emph{all} vertices.}.
Thus, any of the aforementioned algorithms for solving energy games also yields an algorithm for solving mean-payoff games at the expense of at most an additional factor of $ O (n \log (nW)) $ in the running time. Zwick and Paterson~\cite{ZwickP96} provided the first
pseudopolynomial time algorithm that computes all the mean-payoff values, with $O(mn^3W)$ running time. Later, the running time was improved by
Brim, Chaloupka, Doyen, Gentiline, and Raskin~\cite{BrimCDGR11} to $O(mn^2W\log(nW))$, using their reduction to energy games.
The state-of-the-art algorithm for solving a mean-payoff game is due to Comin and Rizzi~\cite{CominR17} which runs in $O(mn^2W)$ time.

\subsection{Our Results and Techniques}
\paragraph{All-Pairs Nonnegative Prefix Paths.}
The version of All-Pairs Nonnegative Prefix Paths (APNP) problem where we want to find the path of maximum weight~\cite{dorfman2023optimal}, naturally generalizes the 
All-Pairs Shortest Paths (APSP) problem. The APSP Hypothesis states that there is no $O(n^{3-\epsilon})$ time algorithm for the APSP, for any $\epsilon > 0$.
However, this version of APNP is more than what is necessary for the application of energy games. We show that the weaker version which only computes reachability (as APNP has been defined), 
also does not allow for a $O(n^{3-\epsilon})$ time algorithm
for any $\epsilon > 0$, under the APSP Hypothesis. 
%\antonis{Let's review this paragraph one more time. Write more about the hypothesis?}

\begin{restatable}{theorem}{ThmAPNPhard}
\label{th:APNPhard}
    Unless the APSP Hypothesis fails, there is no $O(n^{3-\epsilon})$ time algorithm
    that solves the All-Pairs Nonnegative Prefix Paths problem, for any $\epsilon > 0$.
\end{restatable}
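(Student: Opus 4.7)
The plan is to prove Theorem~\ref{th:APNPhard} by a fine-grained reduction from the Negative Triangle problem, which is subcubic-equivalent to APSP. Concretely, given a tripartite graph on parts $A = \{a_1, \ldots, a_n\}$, $B$, $C$ with integer edge weights of magnitude at most $W$, the goal is to decide whether there exist $i, j, k$ with $w(a_i, b_j) + w(b_j, c_k) + w(c_k, a_i) < 0$; an $O(n^{3-\varepsilon})$ algorithm for this would refute the APSP Hypothesis. It therefore suffices to reduce Negative Triangle to APNP on a graph with $O(n)$ vertices in subcubic time.

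The construction encodes each candidate triangle as a three-edge path, using large additive offsets on the first two edges so that only the final prefix-sum constraint is binding. I would build a four-layer DAG on $N = 4n$ vertices $\alpha_i, \beta_j, \gamma_k, \alpha_i^\ast$ (for $i, j, k \in [n]$) with edges $\alpha_i \to \beta_j$ of weight $L - w(a_i, b_j)$, edges $\beta_j \to \gamma_k$ of weight $L - w(b_j, c_k)$, and edges $\gamma_k \to \alpha_i^\ast$ of weight $-2L - w(c_k, a_i) - 1$, for the fixed offset $L := W + 1$. Since this is a strict four-level DAG whose sinks are the $\alpha^\ast$ vertices, the paths from $\alpha_i$ to $\alpha_i^\ast$ are exactly the $n^2$ paths of the form $\alpha_i \to \beta_j \to \gamma_k \to \alpha_i^\ast$, so unintended routings are ruled out by construction.

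The verification is then direct: the three prefix sums of such a path evaluate to $L - w(a_i, b_j)$, $2L - w(a_i, b_j) - w(b_j, c_k)$, and $-1 - \bigl[w(a_i, b_j) + w(b_j, c_k) + w(c_k, a_i)\bigr]$. By the choice of $L$, the first two are always nonnegative, so APNP returns YES on the pair $(\alpha_i, \alpha_i^\ast)$ if and only if some $(j, k)$ makes the third expression nonnegative, equivalently, if and only if $a_i$ lies in a negative triangle. Inspecting the $n$ diagonal queries thus decides Negative Triangle, and a hypothetical $O(N^{3-\varepsilon})$ APNP algorithm would solve Negative Triangle in $O((4n)^{3-\varepsilon}) = O(n^{3-\varepsilon})$ time, contradicting the APSP Hypothesis. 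The only point that warrants care is the tuning of $L$, so that the first two prefix constraints become vacuous while the last captures strict negativity of the triangle weight; for integer weights, $L = W + 1$ combined with the $-1$ shift on the last edge suffices.
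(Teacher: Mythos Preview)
Your reduction is correct and follows the same high-level template as the paper (reduce from Negative Triangle via a layered gadget, then read off the answer from $O(n)$ diagonal APNP queries), but the mechanism differs in one useful respect. The paper negates the edge weights directly, so a negative triangle $a,b,c$ becomes a positive cycle; the difficulty is that only \emph{some} cyclic rotation of $a,b,c$ yields a nonnegative-prefix walk, and the paper invokes an auxiliary lemma (Lemma~\ref{lem:nonegcyclehasnonnegpref}) to guarantee that at least one starting vertex works. Your construction sidesteps this entirely by padding the first two edges with the large offset $L = W+1$, which forces the first two prefix sums to be positive unconditionally and makes the third prefix sum the only binding constraint. This buys a cleaner, self-contained argument with one fewer layer and no need for the auxiliary lemma; the cost is that your gadget uses weights of magnitude $\Theta(W)$ rather than the original weights, but since the lower bound is purely in terms of $n$ this is immaterial. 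The only point worth stating explicitly is that the tripartite form of Negative Triangle you start from is subcubic-equivalent to the general form (and hence to APSP), which is standard but should be cited.
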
 

We parameterize the maximum absolute value of
any edge weight $W$, and we obtain an algorithm with a faster
running time for small values of $W$.

\begin{restatable}{theorem}{ThmAPNP}
\label{th:apnpW}
    There exists a deterministic algorithm that, given a graph $G = (V, E, w)$ with edge weights in the interval $[-W, W]$,  solves the All-Pairs Nonnegative Prefix Paths problem in $\tilde{O}(n^\omega W^\omega)$ time.
\end{restatable}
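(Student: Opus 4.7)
My plan is to reduce APNP to Boolean transitive closure in an auxiliary layered graph $H$ on $O(nW)$ vertices, and then invoke fast Boolean matrix multiplication to compute that closure in $\tilde{O}((nW)^\omega) = \tilde{O}(n^\omega W^\omega)$ time. The graph $H$ has vertex set $V \times \{0, 1, \dots, \Theta(W)\}$, where $(v, e)$ is intended to represent ``being at $v$ with energy level $e$''; for every edge $(u,v)$ of weight $w$ in $G$, I install an edge $(u, e) \to (v, e + w)$ in $H$ whenever both $e$ and $e+w$ lie in the admissible energy range. Under this construction, $(u, v)$ is APNP-reachable iff $(u, 0)$ can reach some $(v, \cdot)$ in $H$, and each entry of the answer matrix can then be read off the transitive closure of $H$ in $O(1)$ time.

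The main obstacle is that, naively, a valid APNP walk may build up intermediate energy as large as $(n-1)W$, so simply truncating energies to $[0, O(W)]$ would risk discarding some reachable pairs. To get around this, I would preprocess $G$ by installing auxiliary ``shortcut'' edges that absorb long stretches of energy accumulation into a single step---e.g.\ by computing the transitive closure of the subgraph of nonnegative-weight edges and using it to propagate bulk energy in $H$. The technical core of the proof would then be a \emph{canonical-walk lemma} stating that, after this augmentation, every APNP-reachable pair is witnessed by a walk in $H$ whose energy stays within $[0, O(W)]$. Proving this lemma is the crux: it must simultaneously ensure (i) the energy bound is tight enough to keep $|V(H)| = O(nW)$, (ii) the preprocessing fits in the target $\tilde{O}(n^\omega W^\omega)$ budget, and (iii) every reachability path in the augmented $H$ corresponds to a bona fide APNP walk in $G$ (soundness), in addition to the converse (completeness).

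Once the canonical-walk lemma is established, the remaining work is standard: compute the transitive closure of the $O(nW)$-vertex graph $H$ using the classical reduction from transitive closure to Boolean matrix multiplication, which costs $\tilde{O}((nW)^\omega) = \tilde{O}(n^\omega W^\omega)$ time. The preprocessing for the shortcut edges, being another transitive closure on at most $n$ vertices followed by a bounded-size augmentation, is subsumed by this same bound; no additional randomization is needed, so the algorithm is deterministic as claimed.
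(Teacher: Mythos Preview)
Your plan has a genuine gap: the canonical-walk lemma you need is false, and the proposed shortcut augmentation does not rescue it. Take $W=1$ and let $G$ be the directed path $v_0 \to v_1 \to \cdots \to v_{2k}$ where the first $k$ edges have weight $+1$ and the last $k$ have weight $-1$. The unique walk from $v_0$ to $v_{2k}$ is a nonnegative prefix path whose energy profile climbs to $k = \Theta(n)$ before returning to~$0$. Your graph $H$ carries energy levels only in $\{0,\dots,\Theta(W)\}=\{0,\dots,O(1)\}$, so this walk does not embed. Your fix---transitive-closure shortcuts over the nonnegative-weight subgraph---yields a direct hop $v_0 \to v_k$, but in $H$ that hop can land at energy at most your cap $L=\Theta(1)$; from there the $k$ subsequent $-1$ edges drive the tracked energy below~$0$ after $L{+}1$ steps, and there is no further nonnegative-edge shortcut to invoke since every remaining edge is negative. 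The underlying obstruction is that no $O(W)$-bounded energy representation can distinguish ``at $v_k$ with energy $k$'' from ``at $v_k$ with energy $L$,'' yet that distinction is exactly what decides whether $v_{2k}$ is reachable; collapsing the two states forces you to sacrifice either soundness or completeness. Allowing energies up to $\Theta(nW)$ makes the lemma hold trivially but blows $|V(H)|$ up to $\Theta(n^2W)$ and the running time to $\tilde O(n^{2\omega}W^\omega)$.

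The paper sidesteps explicit energy tracking altogether. It first reduces weights in $[-W,W]$ to weights in $\{-1,0,+1\}$ via the Alon--Galil--Margalit gadget (each vertex becomes $2W{+}1$ copies), producing a graph on $\Theta(nW)$ vertices, and then solves APNP on $\{-1,0,+1\}$-weighted graphs in $\tilde O(N^\omega)$ time. That second step combines Bradford's $\tilde O(N^\omega)$ algorithm for all-pairs \emph{Dyck} reachability with one ordinary transitive closure: a nonnegative prefix path decomposes into Dyck subpaths (which return to their starting energy) glued together by single nonnegative edges, so APNP reachability in $G$ coincides with ordinary reachability in the graph whose edges are the Dyck-reachable pairs plus the original nonnegative edges. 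Dyck reachability is a CFL-reachability problem and is computed without ever materializing energy levels, which is precisely how the obstruction above is avoided.
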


\paragraph{All-Alice.}
Our first contribution regarding the special cases of energy games,
concerns the \emph{all-Alice case} in which all vertices are controlled by Alice. Note that if we fix a strategy for Bob in any game graph, this can be seen as an all-Alice instance. 
\begin{restatable}{theorem}{ThmAllAlice}\label{thm:allalice}
There exists a deterministic algorithm that, given a game graph $G=(V,E,w)$ in which all vertices are controlled by Alice, computes the minimum sufficient energy of all vertices in $\tilde{O}(n^\omega W^\omega)$ time.
\end{restatable}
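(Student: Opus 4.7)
The plan is to reduce the all-Alice energy game to APNP and invoke Theorem~\ref{th:apnpW}. Let $S = \{u : e(u) = 0\}$ be the set of \emph{safe} vertices at which Alice wins with zero initial energy, and for a path $P$ let $d(P) = \max_k(-w(P_k))$ denote its maximum prefix deficit.

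First I would establish the characterization $e(v) = \min\{d(P) : P \text{ a path from } v \text{ to some } u \in S\}$. The upper bound is clear: follow a minimum-deficit path $P$ to $u \in S$ and then apply $u$'s zero-energy winning strategy; since the accumulated weight along $P$ is at least $-d(P)$, one arrives at $u$ with nonnegative energy. For the lower bound I would fix an optimal positional strategy for Alice from $v$; her walk eventually enters a cycle, and the entry vertex must lie in $S$ because the tail of the walk is itself a zero-energy-sufficient play from that vertex.

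Second, $S$ can be computed by a single APNP call on $G$. A vertex $u$ is safe iff it admits an infinite nonnegative-prefix walk, and by pigeonhole this is equivalent to the existence of some $x$ with a nonnegative-prefix path $u \rightsquigarrow x$ and a nonnegative-prefix cycle at~$x$---both of which are APNP queries on $G$. Hence $S$ is determined in $\tilde O(n^\omega W^\omega)$ time by Theorem~\ref{th:apnpW}.

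Third, I need the minimum prefix deficit $d^*(v)$ from each $v$ to $S$, which is an integer in $\{0, 1, \dots, nW, \infty\}$. Attaching a fresh sink $z$ with $0$-weight edges from each $u \in S$, the task reduces to computing $d^*(v) = \min\{d(P) : P \text{ a path from } v \text{ to } z\}$ for all $v$. I would perform an $O(\log(nW))$-round binary search in parallel over all $v$: in each round every $v$ carries a current candidate budget $\tau_v$, realized by auxiliary source edges that inject $\tau_v$ initial energy units at $v$, and a single APNP call decides $d^*(v) \le \tau_v$ for all $v$ simultaneously. The hard part will be implementing these per-vertex budgets within the APNP framework: a direct source edge of weight $\tau_v \approx nW$ violates the $W$ weight bound, while a unit-weight subdivision adds $\Omega(nW)$ auxiliary vertices. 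Designing a shared gadget that realizes $n$ distinct budgets while keeping the augmented graph at $\tilde O(n)$ vertices with weights in $[-W, W]$ is where the real work lies; once this gadget is in hand, the $O(\log(nW))$ APNP calls each cost $\tilde O(n^\omega W^\omega)$, yielding the claimed total bound.
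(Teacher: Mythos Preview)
Your first two steps---the characterization $e^*(v)=\min\{d(P):P\text{ from }v\text{ to }S\}$ and the computation of $S$ via one APNP call---match the paper exactly (the paper calls $S$ the set $Z$; see Lemma~\ref{lm:Alice0}).

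The divergence is in step~3, and here your proposal has a genuine gap that the paper sidesteps entirely. You defer the construction of a ``shared gadget that realizes $n$ distinct budgets while keeping the augmented graph at $\tilde O(n)$ vertices with weights in $[-W,W]$,'' explicitly calling this ``where the real work lies.'' You have not built this gadget, and it is not obvious how to: a shared chain of $+W$ edges lets all sources share the bulk of the budget, but distinguishing per-vertex queries (so that APNP tells you, for \emph{each} $v$, whether $d^*(v)\le\tau_v$) seems to force either $\Omega(n)$ private vertices per source or a blow-up in weight.

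The paper avoids all of this by observing that $d^*(v)$ is just a longest-path quantity. Delete the outgoing edges of $Z$ and contract $Z$ to a sink $t$, obtaining $G_t$. The key claim (Lemma~\ref{lm:Alicerem}, Claim~1 in the appendix) is that $\delta(u,t)<0$ for every $u\neq t$ in $G_t$, where $\delta$ is the maximum-weight path distance; this holds because a vertex with $\delta(u,t)\ge 0$ would already lie in $Z$. Consequently, along any longest $v$--$t$ path every proper suffix has negative weight, so every proper prefix has weight strictly larger than the whole path, and the minimum prefix weight is attained at the very end: $d^*(v)=-\delta(v,t)$. Thus a \emph{single} SSSP computation on the reversed, negated graph (which has no negative cycles) gives all the $e^*(v)$ values at once, in $\tilde O(Wn^\omega)$ deterministic time. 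No binary search, no budget gadget.
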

Note that the aforementioned reduction from energy games to mean-payoff games always introduces Bob vertices.
Thus, algorithms for the all-Alice mean-payoff decision problem cannot be leveraged by this reduction to compute the minimum energies in the all-Alice case.
\antonis{should we mention one more time here the $\tilde{O}(mn)$ alg of Brim and Chaloupka?}
Our approach for the all-Alice case consists of two steps. In the first step, we identify all vertices $Z$ such that minimum initial energy $0$ suffices, by using Theorem~\ref{th:apnpW}. 
In the second step, we compute the paths of least energy reaching any vertex in $Z$.
For small values of $W$, this improves on the state-of-the-art $\tilde O(mn)$ algorithm~\cite{brim2012using}.

%%%%%%%%%%%%%%%%%%%%%%%%%%%%%%%%%%%%%%%%

\paragraph{All-Bob.}
Our second contribution regarding the special cases
of energy games, is a faster algorithm for the \emph{all-Bob case} in which all vertices are controlled by Bob. Note that if we fix a strategy for Alice in any game graph, this can be seen as an all-Bob instance. 
\begin{restatable}{theorem}{ThmAllBob}\label{thm:allbob}
There exists a randomized (Las Vegas) algorithm that, given a game graph $G=(V,E,w)$ in which all vertices are controlled by Bob, computes the minimum sufficient energy of all vertices, and with high probability the algorithm takes $O (m\log^2 n \log nW\log\log n) $ time.
\end{restatable}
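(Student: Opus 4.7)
I plan to reduce the all-Bob energy game to a single-source shortest-path (SSSP) problem with negative edge weights, and then invoke a recent near-linear time algorithm for this SSSP variant.

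\emph{Step 1: Reduction to shortest walks.}
Since Bob controls every vertex, the minimum sufficient energy $e(v)$ equals the supremum, over the walks Bob can produce from $v$, of the largest deficit encountered, i.e., the negation of the smallest prefix sum of edge weights along the walk. If Bob can reach a negative cycle from $v$, he loops it forever and forces $e(v) = \infty$. Otherwise no walk from $v$ can make its running sum arbitrarily small, and since nonnegative cycles never help decrease the sum, the minimum prefix weight is attained by some simple path ending at a vertex $u$. Thus
\[
e(v) = \max\bigl(0,\, -\min_{u \in V} d(v, u)\bigr),
\]
where $d(v,u)$ denotes the shortest walk length from $v$ to $u$ (equivalently, the shortest path length when $v$ reaches no negative cycle).

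\emph{Step 2: Super-sink and a single SSSP call.}
I would augment the graph with a super-sink $t$ together with a zero-weight arc from every vertex to $t$. This enforces $\min_u d(v, u) = d(v, t)$, so $e(v) = \max(0, -d(v,t))$ whenever this is finite, and $e(v) = \infty$ exactly when some negative cycle is reachable from $v$ in the original graph. Reversing all arcs turns the task into a single-source shortest-path instance from $t$ on a graph with integer weights in $[-W, W]$, whose distance to $v$ equals $d(v,t)$ in the original and is $-\infty$ precisely for the vertices $v$ whose $e$-value is infinite. I then invoke a recent randomized SSSP algorithm for negative weights (in the line of work started by Bernstein, Nanongkai, and Wulff-Nilsen, together with its subsequent polylogarithmic improvements), which computes all these distances and correctly flags the $-\infty$ vertices with high probability in $O(m \log^2 n \log(nW)\log\log n)$ time. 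Converting the resulting distances to energy values is an $O(n)$ post-processing step and does not affect the running time.

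\emph{Main obstacle.}
All the algorithmic heavy lifting is off-loaded to the negative-weight SSSP subroutine, which is also the source of both the randomization and the precise running time in the statement. The only conceptual point that requires care is the observation that in the all-Bob case Bob's best play is simply to follow the walk minimizing the total weight, cut off at its worst prefix; the super-sink construction is then the device that turns the ``shortest walk to any vertex'' question into one SSSP query whose negative-cycle diagnostics pick out exactly the infinite-energy vertices.
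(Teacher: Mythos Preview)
Your approach is essentially the paper's: both add a super-sink $t$ with zero-weight incoming arcs from every vertex, reverse the graph, run a near-linear negative-weight SSSP from $t$, and read off $e^*(v)=\max\{0,-d(v,t)\}$ for the finite-energy vertices.

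The one place where your write-up is looser than the paper is the treatment of the infinite-energy vertices. You assert that the SSSP subroutine will ``correctly flag the $-\infty$ vertices'', but the near-linear algorithms in the Bernstein--Nanongkai--Wulff-Nilsen / Bringmann--Cassis--Fischer line are specified as either returning a shortest-path tree \emph{or} reporting that some negative cycle is reachable from the source; they do not, as a black box, output per-vertex $-\infty$ labels. In your reversed graph $t$ has an arc to every vertex, so as soon as $G$ contains \emph{any} negative cycle the SSSP call sees a negative cycle reachable from the source and returns no distances at all. The paper closes this gap by computing the set $\{v:e^*(v)=\infty\}$ explicitly beforehand---decomposing $G$ into strongly connected components, running negative-cycle detection inside each component, and then doing a backward reachability search from the components that contain a negative cycle---and only afterwards running SSSP on the subgraph induced by the remaining vertices, which is now guaranteed to be negative-cycle-free. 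This preprocessing costs the same $O(m\log^2 n\log(nW)\log\log n)$ and is precisely the step your plan skips.
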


To the best of our knowledge, the fastest known algorithm for the all-Bob case is implied by the reduction to the mean-payoff decision problem and has a running time of $ \tilde O (m n \log^2 W) $.
This comes from $ \tilde O (n \log W) $ calls to the state-of-the-art negative-cycle detection algorithm~\cite{BernsteinNWN22,bringmann2023negative}.

Our approach for the all-Bob case consists of two steps.
In the first step, we run a negative-cycle detection algorithm to remove all vertices reaching a negative cycle.
In the second step, we add an artificial sink to the graph with an edge from every vertex to the sink, and we compute the shortest path of every vertex to the sink using a single-source shortest paths (SSSP) algorithm.
Note that this construction is very close to Johnson's method for computing suitable vertex potentials~\cite{Johnson77}. 
Further note that, since energy games are not symmetric for Alice and Bob, our near-linear time all-Bob algorithm has no implications for the all-Alice case. 

%%%%%%%%%%%%%%%%%%%%%%%%%%%%%%%%%%%%%%%%

\paragraph{No Negative Cycles.}
Finally, we give an improved algorithm for the special case where there are no negative cycles.
\begin{restatable}{theorem}{ThmNoNegCycles}\label{thm:nonnegcycles}
There exists a deterministic algorithm that, given a grame graph $G=(V,E)$ without negative cycles, computes the minimum sufficient energy of all vertices in $ O (m n) $ time.
\end{restatable}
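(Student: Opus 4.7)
The plan is to analyse a synchronous variant of value iteration, and to show that under the no-negative-cycle hypothesis it converges within $n$ rounds. Each round costs $O(m)$ time, for a total of $O(mn)$.

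For each vertex $v$ and round $i \geq 0$, define $f_i(v)$ by the Bellman recurrence $f_0 \equiv 0$ and
\begin{equation*}
f_{i+1}(v) =
\begin{cases}
\min\limits_{(v,u) \in E} \max\{0,\, f_i(u) - w(v,u)\} & \text{if } v \text{ is controlled by Alice},\\
\max\limits_{(v,u) \in E} \max\{0,\, f_i(u) - w(v,u)\} & \text{if } v \text{ is controlled by Bob}.
\end{cases}
\end{equation*}
Standard arguments will give that $(f_i)_{i \geq 0}$ is monotone non-decreasing and bounded above by $f^*$, so $f_n \leq f^*$ holds automatically. Each update $f_i \to f_{i+1}$ can be carried out by a single scan over the edges in $O(m)$ time, so the running time claim reduces entirely to bounding the number of iterations by $n$.

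The core of the proof is the claim that $f_n = f^*$ whenever $G$ has no negative cycles. I would invoke the classical fact that both players have optimal positional strategies $\sigma^*$ and $\tau^*$ in an energy game. From any starting vertex $v$, the play under $(\sigma^*, \tau^*)$ is a lasso whose tail and cycle together have length at most $n$. Because the cycle portion has non-negative weight by assumption, the maximum prefix deficit of the infinite play, which equals $f^*(v)$, is already attained within the first $n$ steps. On the value-iteration side, $f_n(v)$ captures precisely the $n$-step game value, and combining these two facts yields $f_n(v) = f^*(v)$.

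The main obstacle will be this convergence bound. The standard value iteration for general energy games needs $\Theta(nW)$ rounds because negative cycles can make values grow slowly; here the absence of negative cycles must be exploited carefully to rule out such slow progress. A subtle point is that the finite-horizon game implicitly permits Alice to play time-dependent strategies, which are a priori more powerful than positional ones, so I will need to verify that in the no-negative-cycle setting this extra freedom cannot drive $f_n(v)$ strictly below $f^*(v)$. The cleanest route I foresee is to reduce to the positional case: for any positional Alice strategy, Bob's best positional response produces a lasso of length at most $n$ whose $n$-step maximum prefix deficit coincides with its infinite prefix deficit, and a standard min--max argument over positional strategies then closes the gap to $f^*$.
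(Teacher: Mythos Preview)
Your plan is essentially the paper's: run the synchronous value iteration for $n$ rounds, identify $f_i$ with the value of the $i$-round game (Lemma~\ref{lem:invalg}), and then argue that under the no-negative-cycle assumption the $n$-round and infinite-duration values coincide (Lemma~\ref{lem:nonnegntoinf}). You also correctly isolate the crux: Alice's time-dependent strategies in the finite game could a priori beat her positional ones and push $f_n$ strictly below $f^*$.

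Your closing sentence, however, does not by itself resolve that crux. From the lasso observation you obtain that the $n$-step min--max \emph{restricted to positional strategies} equals $f^*$. But $f_n$ is the $n$-step min--max over \emph{general} strategies, and enlarging Alice's strategy set can only lower the outer minimum while restricting Bob's can only lower the inner maximum; neither direction of comparison gives $f_n \geq f^*$. What is actually needed is a proof that positional strategies already suffice in the $n$-round game itself, and this does not follow from positionality of the infinite game or from a min--max swap. The paper supplies this as a separate lemma (Lemma~\ref{lem:posstrnonneg}), proved by a cycle-excision argument: if an optimal finite-horizon play revisits a vertex, the enclosed cycle has nonnegative weight, so Alice loses nothing by always repeating her first choice at that vertex (and symmetrically for Bob). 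Once you add that step, your argument and the paper's coincide.
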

\antonis{Should we say it's "strongly-polynomial"? If it actually is.}
To the best of our knowledge, the fastest known algorithm for this special case has a running time of $  O (m n \log W) $ by running the above mentioned algorithm of Chatterjee, Henzinger, Krinninger, and Nanongkai~\cite{ChatterjeeHKN14} with penalty $ P = W $. 
We use a new variant of the value iteration algorithm where the energy function after $i$ steps corresponds to the minimum energy function in an \emph{$i$-round game}. 
A similar variant has been used by Chatterjee, Doyen, Randour, and Raskin~\cite{Chatterjee0RR15} for the Mean-Payoff games. We adapt this algorithm and provide the necessary analysis to use it for energy games. 

An $i$-round game is a finite version of the energy game, where a token is passed for only $i$~rounds. In this version, the goal is to find the initial energy that Alice needs, in order to keep the energy level nonnegative for these $i$-rounds. Then we show that in graphs without negative cycle, the infinite game is equivalent to the $n$-round game. 

\paragraph{Structure of the paper.} In the next section, we provide some preliminaries, including the formal definition of an energy game. In Section~\ref{sec:APNP}, we study the All-Pairs Nonnegative Prefix Paths
problem, and we present an algorithm for the special case that the edge weights
are in $\{-1, 0, +1\}$, an algorithm for general edge weights, and a lower bound. Next in Section~\ref{sec:all-alice}, we consider the all-Alice case
by reducing this problem to the All-Pairs Nonnegative Prefix Paths problem.
In Section~\ref{sec:all-bob}, we consider the all-Bob case, and finally in Section~\ref{sc:nonegcycles}, we consider game graphs without negative cycles.

\section{Preliminaries}\label{sc:prelim}
\paragraph{Graphs.}
Given a directed graph $G=(V, E, w)$, we denote by $n = |V|$ the number
of vertices, by $m = |E|$ the number of edges, and by $W$ the maximum
absolute value of any edge weight. Also, we denote $N^+(v)$ for the \emph{out-neighborhood} of $v$, i.e., $N^+(v) := \{ u\in V : (v,u)\in E\}$. Further, we denote $\deg^+(v)$ for the \emph{out-degree} of $v$, i.e., $\deg^+(v) := |N^+(v)|$. Similarly, $N^-(v)$ and $\deg^-(v)$ denote the \emph{in-neighborhood} and \emph{in-degree} respectively. 

A \emph{path} $P$ is a sequence of vertices $u_0u_1\cdots$ such that $(u_i,u_{i+1})\in E$ for every $i\geq 0$. We say a path is \emph{finite} if it contains a finite number of vertices (counted with multiplicity). We say a path is \emph{simple} if each vertex appears at most once. A \emph{lasso} is a path of the form $u_0u_1\cdots u_ju_i$, where the vertices $u_0,\dots, u_j$ are disjoint and $i < j$. In other words, it is a simple path leading to a cycle. A \emph{nonnegative prefix path} is a path $P=u_0u_1\cdots$ such that $\sum_{j=0}^{i-1} w(u_j,u_{j+1})\geq 0$ for all $1\leq i \leq |P|$. Further, we denote the weight of a path $P=u_0u_1\cdots$ by $w(P):=\sum_{j=0}^{|P|-1} w(u_j,u_{j+1})$.
For a fixed path $P = u_0u_1\cdots$, the energy level $e(u_i)$ of a vertex $u_i$ in $P$ is equal to $\sum_{j=0}^{i-1} w(u_j,u_{j+1})$. That is,
the sum of all the weights along $P$ until $u_i$.

Let $G=(V, E, w)$ be a directed graph with edge weights $-1$ and $+1$, and let $s, t \in V$ be two vertices of $G$.
Then, a \emph{Dyck path from $s$ to $t$} is a nonnegative prefix path from $s$ to $t$ of total weight zero \cite{Bradford17}. 
%\tijn{To the best of our knowledge, Dyck paths are not usually applied to this
%  context. It seems that the first work in which Dyck paths have been used for
%  digraphs (with the required meaning) is [8]. I suggest adding a citation for
%  Dyck paths and the connections between their different semantics.}
For a graph $H$, we refer to the corresponding
functions by using $H$ as subscript (e.g., we use the notation  $w_H(\cdot)$
for the weight function of $H$).

\paragraph{Energy Games.}
An energy game is an infinite duration game played by two players, Alice and Bob. The game is played on a \emph{game graph} which a weighted directed graph $G = (V, E, w)$, where each vertex has at least one outgoing edge. The weights are integers and lie in the range $\{-W,-W+1, \dots, W-1,W\}$. The set of vertices is partitioned in two sets $V_A$ and $V_B$, controlled by Alice and Bob respectively. Furthermore, we are given a starting vertex $s\in V$, and initial energy $e_0\geq 0$. We start with position $v_0=s$. After the $i_{\text{th}}$ round, we are at a position $v_i \in V$ and have energy $e_i$. In the $i_{\text{th}}$ round, if $v_{i-1} \in V_A$ ($v_{i-1}\in V_B$) then Alice (Bob) chooses a next vertex $v_i\in N^+(v_{i-1})$ and the energy changes to $e_i=e_{i-1} + w(v_{i-1}, v_i)$. The game ends when $e_i < 0$, in which case we say that Bob wins. If the game never ends, namely, $e_i \geq 0$ for all $i \geq 0$, we say that Alice wins. The goal is to find out the minimum initial energy $e_0 \geq 0$ such that Alice wins when both players play optimally. Note that allowing $e_0=\infty$ means that such an energy always exist. 

To make this goal more formal, we have to introduce \emph{strategies}. A strategy for Alice (Bob) tells us given the current point $v_i\in V_A$ ($v_i\in V_B$) and the history of the game, $v_0, \dots, v_i$, where to move next.
It turns out that we can restrict ourselves to \emph{positional strategies}~\cite{ehrenfeuchtM79,BouyerFLMS08}, which are deterministic and do not depend on the history of the game. We denote a positional strategy of Alice by $\sigma\colon V_A \to V$ where $\sigma(v)\in N^+(v)$ for $v \in V_A$, and a positional strategy of Bob by $\tau\colon V_B \to V$ where $\tau(v)\in N^+(v)$ for $v \in V_B$. For any pair of strategies $(\sigma,\tau)$ we define $G(\sigma, \tau)$ to be the subgraph $(V,E')$ corresponding to these strategies, where $E' = \{(v,\sigma(v)) : v \in V_A\}\cup \{(v,\tau(v)) : v \in V_B\}$. Note that in this graph each vertex has exactly one out-neighbor. 
Let $P_i$ be the unique path $s=u_0,u_1,\dots,u_i$ in $G(\sigma,\tau)$ of length $i$ originating at $s$. Then at vertex $s$ with initial energy $e_0$ and with these strategies,
Alice wins if $e_0 + w(P_i) \geq 0$ for all $i \geq 0$, and Bob wins if $e_0 +w(P_i) < 0$ for at least one $i\geq 0$. The \emph{minimum sufficient energy at $s$ with respect to $\sigma$ and $\tau$} is the minimum energy
such that Alice wins, namely $e_{G(\sigma,\tau)}(s):= \max\{0, -\inf_{i\geq0} w(P_i)\}$.  Finally, we define the \emph{minimum sufficient energy} at $s$ as follows: 
$$e^*_G(s) := \min_\sigma \max_\tau e_{G(\sigma,\tau)}(s),$$
where the minimization and the maximization are over all the positional strategies $\sigma$ of Alice and $\tau$ of Bob, respectively. 
We omit the subscript $G$, and use $e_{\sigma,\tau}(s)$ instead of $e_{G(\sigma,\tau)}(s)$, whenever this is clear from the context.
By Martin's determinacy theorem~\cite{Martin75}, we have that $\min_\sigma \max_\tau e_{\sigma,\tau}(s)= \max_\tau \min_\sigma e_{\sigma,\tau}(s)$, thus the outcome is independent of the order in which the players pick their strategy. Now we can define \emph{optimal strategies} as follows. A strategy $\sigma^*$ is an optimal strategy for Alice, if $e_{\sigma^*, \tau}(s) \leq e^*(s)$ for any strategy $\tau$ of Bob. Similarly, $\tau^*$ is an optimal strategy for Bob, if $e_{\sigma,\tau^*}(s) \geq e^*(s)$ for any strategy $\sigma$ of Alice. 
An \emph{energy function} is a function $e\colon V \to \Z_{\geq0}\cup\{\infty\}$. The function $e^*_G(\cdot)$ (or $e^*(\cdot)$) as defined above, is the \emph{minimum sufficient energy function}.

\section{All-Pairs Nonnegative Prefix Paths Problem} \label{sec:APNP}
In this section, we study the All-Pairs Nonnegative Prefix Paths (APNP) problem.
The goal of this problem is to determine for every pair of vertices
whether there exists a nonnegative prefix path between them.
A similar problem is the \emph{All-Pairs Dyck-Reachability problem},
where the goal is to determine for every pair of vertices
whether there exists a Dyck path between them (given that the edge weights are in $\{-1, +1\}$). Furthermore, another
standard problem is the \emph{transitive closure problem}, which asks to determine for every pair of vertices whether there exists a path between them. 

Bradford~\cite{Bradford17} provided an $\tilde{O}(n^\omega)$ time algorithm
for the All-Pairs Dyck-Reachability problem. 
Moreover, the transitive closure problem admits an $\tilde{O}(n^\omega)$ algorithm~\cite{AhoHU74}.

\begin{theorem} \label{th:bradalg}
    There exists a deterministic algorithm that, given a graph $G = (V, E, w)$ with edge weights in $\{-1, 1\}$, solves the All-Pairs Dyck-Reachability problem in $\tilde{O}(n^\omega)$ time.
\end{theorem}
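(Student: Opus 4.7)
The plan is to encode Dyck-reachability as the least fixed point of a Boolean matrix equation and then evaluate that fixed point with only a polylogarithmic number of fast matrix multiplications. Let $D$ be the $n \times n$ Boolean matrix with $D[s][t] = 1$ iff there is a Dyck path from $s$ to $t$, and let $P$ and $N$ denote the Boolean adjacency matrices of the edges of weight $+1$ and $-1$, respectively. Any Dyck path is either empty, a concatenation of two Dyck paths, or a $+1$ edge followed by a Dyck path followed by a matching $-1$ edge. Writing this out gives
$$D = I \;\vee\; (D \cdot D) \;\vee\; (P \cdot D \cdot N),$$
where all products are Boolean.

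First, I would implement this as an iteration: set $D_0 := I$ and compute $D_{k+1} \leftarrow I \vee (D_k \cdot D_k) \vee (P \cdot D_k \cdot N)$ until a fixed point is reached, with each step dominated by a constant number of Boolean matrix multiplications costing $O(n^\omega)$ time. Second, I would show that $O(\log n)$ iterations suffice. The squaring term $D_k \cdot D_k$ mimics the classical doubling trick for transitive closure, which turns a relation closed under concatenation into its transitive closure in $O(\log n)$ rounds; the nesting term $P \cdot D_k \cdot N$ adds one matched pair of outer brackets per iteration, so it is the interaction of these two operations that has to be controlled.

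The hard part will be bounding the iteration count, since each reachable pair admits a parse tree under the three rules above whose depth may a priori be $\Omega(n)$ because Dyck witnesses can be extremely long. The plan is to show that at the level of Boolean reachability one can always choose a \emph{rebalanced} parse tree of depth $O(\log n)$: decompose any Dyck path at its zero-energy crossings into elementary, fully nested segments, rebuild the outer concatenation as a balanced binary tree (depth $O(\log n)$), and recurse inside each nested segment, where a vertex-elimination argument ensures that the recursion depth is also $O(\log n)$. Combining this with the $O(n^\omega)$ cost per iteration yields the claimed $\tilde{O}(n^\omega)$ bound.
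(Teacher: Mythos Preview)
The paper does not prove this theorem; it simply cites Bradford~\cite{Bradford17}. So there is no ``paper's proof'' to compare against, and your task was really to reconstruct (or re-derive) Bradford's result.

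Your fixed-point equation $D = I \vee D\cdot D \vee P\cdot D\cdot N$ is correct, and so is the observation that one round of the iteration costs $O(n^\omega)$. The gap is the claimed $O(\log n)$ bound on the number of iterations. Consider the directed path
\[
v_0 \xrightarrow{+1} v_1 \xrightarrow{+1} \cdots \xrightarrow{+1} v_k \xrightarrow{-1} v_{k+1} \xrightarrow{-1} \cdots \xrightarrow{-1} v_{2k}.
\]
The unique walk from $v_0$ to $v_{2k}$ is a Dyck path whose only parse under $S \to \epsilon \mid SS \mid (S)$ is the fully nested one of depth~$k$. With $n = 2k+1$ vertices, the pair $(v_0,v_{2k})$ first enters $D_j$ at $j = k = \Theta(n)$: the squaring term $D_j\cdot D_j$ cannot help, since the only intermediate vertex $v_m$ with a Dyck path from $v_0$ is $v_{2k}$ itself. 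Hence your iteration needs $\Theta(n)$ rounds in the worst case, giving only $O(n^{\omega+1})$.

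Your proposed fix, the ``vertex-elimination argument'' for bounding the recursion depth, does not go through on this example either: every vertex occurs exactly once, so there is nothing to eliminate, yet the nesting depth is linear. The rebalancing trick handles the \emph{concatenation} structure (the $SS$ rule) in $O(\log n)$ layers, but it does nothing for the \emph{nesting} structure (the $(S)$ rule), and nesting depth can genuinely be $\Theta(n)$. Bradford's algorithm gets around this with a different idea---roughly, a doubling on the counter height rather than on the parse-tree depth, computing in each phase several auxiliary reachability matrices (paths that raise, lower, or preserve the counter within a bounded window) and combining them so that the admissible counter range doubles per phase. That is the missing ingredient; the naive grammar iteration you wrote down is not enough.
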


Our approach for the APNP problem consists of two stages. At first,
we solve the APNP problem for the special case where the edge weights are from the set $\{-1, 0, +1\}$,
by exploiting the algorithm of \cite{Bradford17} for the All-Pairs Dyck-Reachability
problem. Afterwards, we extend our algorithm to work with general weights, by showing
that a reduction used in \cite{AlonGM97} preserves the properties we need. 

In the end of the section, we also present a conditional lower bound for the APNP problem under
the APSP Hypothesis, which is one of the main hypotheses in fine-grained complexity.

\subsection{All-Pairs Nonnegative Prefix Paths with edge weights in \texorpdfstring{$\{-1, 0, +1\}$}{-1,0,+1}}
Consider a graph $G = (V, E)$ with edge weights $-1$ and $+1$. By definition, we 
have that any Dyck path is also a nonnegative prefix path. However, the
opposite is not necessarily true. 
Recall that nonnegative prefix paths allow the energy level of their last vertex to be a strictly positive value, while in Dyck paths this value must be zero. This implies that
an All-Pairs Dyck-Reachability algorithm does not trivially gives us 
an All-Pairs Nonnegative Prefix Paths algorithm. Nevertheless, we show how to overcome this issue and
we use an All-Pairs Dyck-Reachability algorithm as a subroutine in order to 
solve the All-Pairs Nonnegative Prefix Paths problem. 

\paragraph{Algorithm for the $\{-1, 0, +1\}$ case.}
Consider a directed graph $G = (V, E, w)$, with edge weights in $\{-1, 0, +1\}$.
In the beginning of the algorithm, we construct a graph $G_2$ as follows.

\begin{enumerate}
\item  Initially, we create a new graph $G_1 = (V_1, E_1, w)$ by replacing every edge of zero weight with an edge of weight $+1$ and an edge of weight $-1$. 
Specifically, for each vertex $u$ with at least one outgoing edge $(u, v) \in E$ with $w(u, v) = 0$, 
we add a new vertex $u'$, and add an edge $(u,u')$ with $w(u,u')=+1$. Next, for each edge $(u,v)\in E$ with $w(u,v)=0$, we remove the edge $(u, v)$, and add the edge $(u', v)$ with weight $-1$.\footnote{Note that by doing the naive thing which is to replace each edge of zero weight by two edges, one with weight $+1$ and one with weight $-1$, potentially blows up the number of vertices to $\Omega(m)$. In turn, since the running
time depends on the number of vertices, this translates to a blow up
of the running time.}

\item Next, we run on $G_1$ the algorithm of Theorem~\ref{th:bradalg}, which solves the All-Pairs Dyck-Reachability in time $\tilde{O}(n^\omega)$ for edge weights in $\{-1, +1\}$.

\item Finally, we create another new graph $G_2 = (V, E_2)$ with the original vertex set and an edge set $E_2$ defined as follows.
The set $E_2$ contains an edge $(u, v) \in V \times V$ if and only if there is a Dyck path from $u$ to $v$ in $G_1$ or 
$w(u, v) = 1$ in $G$ with $(u, v) \in E$. \antonis{Simpler alternative: add a negative self-loop on $t$?}
\end{enumerate}

In the end, we run on $G_2$ a transitive closure algorithm,
and we return that there is a nonnegative prefix path in $G$
if and only if there is a path in $G_2$. Notice that graphs $G$ and $G_1$ are weighted, while $G_2$ is unweighted.

\paragraph{Analysis of the algorithm.}
The following observation shows that the replacement of zero weight edges is valid,
in the sense that nonnegative prefix paths of total weight zero\footnote{Observe that a Dyck path is a nonnegative prefix path of total weight zero consisting only of edges $-1$ and $+1$. Thus, we avoid to use the term \emph{Dyck path} for $G$ because it may contains edges of weight zero.} in $G$ correspond
to Dyck paths in $G_1$ and vice versa. 
Moreover, we prove the claim that the transitive closure problem in $G_2$ is equivalent to
the All-Pairs Nonnegative Prefix Paths problem in $G$.

\begin{observation} \label{obs:GG1Dyck}
    For every pair of vertices $u, v \in V$, there exists a nonnegative prefix path
    of total weight zero from $u$ to $v$ in $G$ if and only if there exists
    a Dyck path from $u$ to $v$ in $G_1$.
\end{observation}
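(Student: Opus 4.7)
The plan is to prove the two directions separately, exploiting the very rigid structure of the gadget that replaces each zero-weight edge. Recall that for each vertex $u\in V$ with at least one outgoing zero-weight edge, the construction introduces a fresh vertex $u'$ whose unique incoming edge is $(u,u')$ of weight $+1$, and whose outgoing edges are exactly the edges $(u',v)$ of weight $-1$ for those $v$ such that $(u,v)$ was a zero-weight edge of $G$. This forces any traversal through $u'$ to be of the form $u\to u'\to v$ for some such $v$, with total weight $+1-1=0$, i.e.\ it exactly simulates the original edge $(u,v)$.

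For the forward direction, I would take a Dyck path $P=w_0w_1\cdots w_k$ from $u$ to $v$ in $G$ and construct a walk $P_1$ in $G_1$ by expanding every traversal of a zero-weight edge $(w_i,w_{i+1})$ into the two-step detour $w_i\to w_i'\to w_{i+1}$, while copying traversals of $\pm 1$ edges unchanged. All original prefix sums are preserved, so they remain nonnegative, and each newly inserted prefix sum (at a vertex $w_i'$) is exactly one more than the prefix sum at $w_i$, hence also nonnegative; the total weight is unchanged, so $P_1$ is a Dyck path in $G_1$ from $u$ to $v$.

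For the converse, I would take a Dyck path $P_1=w_0w_1\cdots w_k$ from $u$ to $v$ in $G_1$. Whenever $w_i$ is one of the new vertices, say $w_i=x'$, the structural observation above forces $w_{i-1}=x$ and $w_{i+1}=y$ with $(x,y)$ a zero-weight edge of $G$; moreover $w_i$ cannot be an endpoint since $u,v\in V$. Contracting each such occurrence $x\to x'\to y$ into the single edge $(x,y)$ of $G$ yields a walk $P$ in $G$ from $u$ to $v$. Since each contracted detour has net weight $0$, all prefix sums of $P$ at surviving positions equal the corresponding prefix sums of $P_1$, so they are nonnegative, and the total weight is still zero; thus $P$ is a Dyck path in $G$.

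I do not expect a real obstacle here; the only subtle point is the claim that visiting a new vertex $x'$ in the interior of a Dyck path of $G_1$ must be flanked by $x$ on the left and some $y$ on the right, which is immediate from the fact that $x'$ has $(x,x')$ as its unique in-edge and only $(x',y)$-edges as out-edges. Once that observation is made, the weight-preserving bijection between detours and zero-weight edges makes both directions routine.
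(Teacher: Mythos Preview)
Your argument is correct and is exactly the natural expansion/contraction bijection one would use; the paper itself states this as an observation without proof, so there is nothing to compare against beyond noting that your write-up spells out precisely the structural facts (unique in-edge and restricted out-edges of each $u'$, weight-preserving replacement) that make the observation immediate.
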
 
%\antonis{Proof.}

\begin{lemma} \label{lem:nonegtopath}
    For every pair of vertices $u, v \in V$, there exists a nonnegative prefix path from $u$ to $v$ in $G$ if and only if there exists a path from $u$ to $v$ in $G_2$.    
\end{lemma}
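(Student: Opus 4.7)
The plan is to prove the two directions separately. The reverse direction is straightforward bookkeeping: each edge $(x,y) \in E_2$ arises either from an edge of $G$ with weight $\ge 0$ (which is itself a nonnegative prefix path of length one) or from a Dyck path in $G_1$, which by Observation~\ref{obs:GG1Dyck} yields a Dyck path in $G$ and hence a nonnegative prefix path. I would then show that concatenating nonnegative prefix paths preserves the nonnegative prefix property: if $P_1$ ends with total weight $w(P_1) \ge 0$ (itself a prefix sum of $P_1$), then on any prefix of $P_2$ the accumulated energy equals $w(P_1)$ plus a nonnegative prefix sum of $P_2$, which is again $\ge 0$. Applying this edge by edge along the $G_2$-path yields the desired nonnegative prefix path in $G$.

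For the forward direction, let $P = u_0 u_1 \cdots u_k$ be a nonnegative prefix path from $u=u_0$ to $v=u_k$ in $G$, with energy profile $e_0 = 0, e_1, \dots, e_k \ge 0$. I would proceed by strong induction on $k$. The base case $k \le 1$ is immediate. If $e_k = 0$, then $P$ itself is a Dyck path in $G$, so by Observation~\ref{obs:GG1Dyck} there is a Dyck path from $u$ to $v$ in $G_1$, and hence $(u,v) \in E_2$. Otherwise let $j^*$ be the largest index with $e_{j^*} < e_k$. Using that edge weights lie in $\{-1, 0, +1\}$, combined with the maximality of $j^*$ (which forces $e_\ell \ge e_k$ for every $\ell > j^*$), one checks that the edge $(u_{j^*}, u_{j^*+1})$ must have weight exactly $+1$, so it already lies in $E_2$, and that $e_{j^*+1} = e_k$. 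The suffix $P[j^*+1..k]$ then starts and ends at energy $e_k$ and never dips below it, so shifting by $-e_k$ exhibits it as a Dyck path from $u_{j^*+1}$ to $v$ in $G$, placing $(u_{j^*+1}, v)$ in $E_2$. The prefix $P[0..j^*]$ is a strictly shorter nonnegative prefix path, so the inductive hypothesis supplies a path from $u$ to $u_{j^*}$ in $G_2$, which we extend by the two edges above.

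The main obstacle is selecting the right split point in the inductive step: we want an intermediate position whose suffix forms a Dyck path in $G$ (contributing a single edge of $G_2$) while its prefix is still a nonnegative prefix path (so induction applies). Choosing $j^*$ as the last index with $e_{j^*} < e_k$ is the right guess, but verifying that $(u_{j^*}, u_{j^*+1})$ really has weight $+1$ and that the suffix never drops below $e_k$ relies crucially on the integrality of the energies together with the restriction $w \in \{-1, 0, +1\}$; this is exactly the place where the argument uses the structural assumption of this subsection.
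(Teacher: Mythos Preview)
Your proof is correct, and the reverse direction matches the paper essentially word for word. The forward direction, however, is organized differently.

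The paper decomposes the nonnegative prefix path from the \emph{front}: it lets $a$ be the first vertex after $u$ attaining the minimum energy level along $\pi$, argues that either $e(u)=e(a)$ (so the $u$-to-$a$ prefix is a Dyck path) or $e(u)<e(a)$ (which, since $a$ is the first minimum and weights are in $\{-1,0,+1\}$, forces $a$ to be the immediate successor of $u$ via a $+1$ edge), and then iterates from $a$. You instead decompose from the \emph{back}: you peel off from the right a Dyck suffix preceded by a single $+1$ edge, and recurse on the remaining prefix via strong induction. Both arguments hinge on exactly the same structural fact---integrality of the energy profile together with $|w|\le 1$ forces the transition across the chosen threshold to be a $+1$ edge---so neither is more general, but the paper's greedy-from-the-front version avoids the explicit induction scaffolding and the (harmless) degenerate case $j^*+1=k$, where your ``Dyck suffix'' collapses to a single vertex and the second $E_2$ edge is not needed.
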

\begin{proof}
    Assume that there exists a nonnegative prefix path $\pi$ from $u$ to $v$ in $G$. Let $a$ be the first vertex after $u$ along $\pi$ with 
    a minimum energy level. Initially, we show that the edge $(u, a)$ appears in $G_2$.
    Since $\pi$ is a nonnegative prefix path, we have that $e(u) \leq e(a)$. If $e(u) < e(a)$, then there must be an edge 
    $(u, a)$ in $G$ with weight $+1$. Also if $e(u) = e(a)$, then the subpath of $\pi$ from $u$ to $a$ is a nonnegative prefix path of total weight zero. Then by Observation~\ref{obs:GG1Dyck}, the subpath of
    $\pi$ from $u$ to $a$ is a Dyck path in $G_1$.
    Therefore, in both cases we have added the edge $(u, a)$ in $G_2$. As the vertex $a$ has a minimum energy level, we can apply
    the same argument iteratively starting from $a$, to conclude that there exists a path from $u$ to $v$ in $G_2$.
    
    Assume now that there exists a path $\pi$ from $u$ to $v$ in $G_2$. 
    By construction, the edges of $\pi$ correspond either to edges in $G$ with weight $+1$ or to Dyck paths in $G_1$.
    By Observation~\ref{obs:GG1Dyck}, these Dyck paths in $G_1$ correspond to nonnegative prefix paths of total weight zero in $G$.
    Since positive edges increase the energy level and nonnegative prefix paths at least maintain the energy level, we conclude that
    there exists a nonnegative prefix path from $u$ to $v$ in $G$.
\end{proof}

\begin{lemma} \label{lm:apnpspecial}
    There exists a deterministic algorithm that, given a graph $G = (V, E, w)$ with edge weights in $\{-1, 0, +1\}$, solves the All-Pairs Nonnegative Prefix Paths problem in $\tilde{O}(n^\omega)$ time.
\end{lemma}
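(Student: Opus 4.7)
The plan is to show that the three-stage algorithm already described (build $G_1$ by splitting zero-weight edges, run Bradford's All-Pairs Dyck-Reachability on $G_1$, build $G_2$ from the Dyck-reachability table and the nonneg-weight edges of $G$, then run transitive closure on $G_2$) is both correct and runs in $\tilde{O}(n^\omega)$ time. Correctness has almost been handed to us: Lemma~\ref{lem:nonegtopath} states that there is a nonnegative prefix path from $u$ to $v$ in $G$ if and only if there is a path from $u$ to $v$ in $G_2$, and that lemma relies only on Observation~\ref{obs:GG1Dyck} about the equivalence of Dyck reachability between $G$ and $G_1$. So the remaining work is really a runtime analysis, plus a one-line invocation of the algorithm from Theorem~\ref{th:bradalg} and a standard transitive closure subroutine.

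The key quantitative point that needs to be pinned down is the size of $G_1$. Because we add exactly one auxiliary vertex $u'$ for each original vertex $u$ that has at least one outgoing zero-weight edge (and not one per zero-weight edge), $G_1$ has at most $2n$ vertices. I would spell this out explicitly, since this is the only place where the construction could a priori blow up. Once we have $|V_1| \leq 2n$, Bradford's algorithm runs on $G_1$ in $\tilde{O}((2n)^\omega) = \tilde{O}(n^\omega)$ time and produces, for every pair $(u,v) \in V_1 \times V_1$, a bit indicating Dyck-reachability. Restricting this table to pairs in $V \times V$ is immediate.

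Next I would describe building $G_2$: for each ordered pair $(u,v) \in V \times V$ include the edge $(u,v)$ if the corresponding Dyck-reachability bit is set or if $(u,v) \in E$ with $w(u,v) \geq 0$. This scan takes $O(n^2)$ time and yields a graph on $n$ vertices with $O(n^2)$ edges. Running a standard transitive closure algorithm on $G_2$ costs $\tilde{O}(n^\omega)$, and by Lemma~\ref{lem:nonegtopath} the resulting reachability relation exactly answers APNP on $G$. Summing the three terms gives the claimed $\tilde{O}(n^\omega)$ bound.

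The only genuinely delicate step is the vertex-count bookkeeping in the construction of $G_1$; the footnote in the algorithm description is already hinting at the naive trap of adding one auxiliary node per zero-weight edge, which would push the vertex count to $\Theta(m)$ and the runtime to $\tilde{O}(m^\omega)$. I would therefore foreground this and make sure the formal description unambiguously introduces a single $u'$ per original vertex $u$, with all outgoing zero-weight edges of $u$ rerouted through that one $u'$. Everything else is bookkeeping and appeals to Theorem~\ref{th:bradalg}, Observation~\ref{obs:GG1Dyck}, Lemma~\ref{lem:nonegtopath}, and the known $\tilde{O}(n^\omega)$ transitive closure algorithm.
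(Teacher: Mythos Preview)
Your proposal is correct and follows exactly the same approach as the paper: invoke Lemma~\ref{lem:nonegtopath} for correctness and bound the running time by noting that $|V_1|\le 2n$, so Bradford's algorithm and the final transitive closure each cost $\tilde O(n^\omega)$. The paper's own proof is terser (it just cites Lemma~\ref{lem:nonegtopath} and the two $\tilde O(n^\omega)$ subroutines), but your explicit emphasis on the $|V_1|\le 2n$ bookkeeping is a worthwhile clarification.
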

\begin{proof}
    The number of vertices of $G_1$ is $O(n)$ by construction, where $n$ is the initial number of vertices in $G$. Hence, the construction of $G_2$ runs in $\tilde{O}(n^\omega)$. Moreover,
    the transitive closure problem in $G_2$ can be solved in $\tilde{O}(n^\omega)$ time as well~\cite{AhoHU74}.
    Thus by Lemma~\ref{lem:nonegtopath}, the claim follows.
\end{proof}

\subsection{All-Pairs Nonnegative Prefix Paths with general edge weights}
We extend now Lemma~\ref{lm:apnpspecial} for graphs with general edge weights, in the cost of an extra factor $W^\omega$ in the running time.
The idea is to use the reduction by Alon, Galil, and Margalit~\cite{AlonGM97}, who reduce the All-Pairs Shortest Paths (APSP) problem with general edge weights
to the special case where the edge weights are in $\{-1, 0, +1\}$. We present the reduction for completeness, and we prove that the same reduction also
preserves the properties that we need for the All-Pairs Nonnegative Prefix Paths problem.

\paragraph{Reduction from general weights to $\{-1, 0, +1\}$ \cite{AlonGM97}.}
Given a graph $G = (V, E, w)$ with weights in the interval $[-W, W]$, we create another graph $G'$ with weights only in $\{-1, 0, +1\}$, as follows.
For every vertex $v \in V$ in $G$, we create $2W+1$ vertices $\{v^i\}_{i=-W}^W$ in $G'$.
We say that vertex $v^0$ of $G'$ is the origin of vertex $v$.
Then, we add in $G'$ an edge $(v^{i+1}, v^i)$ of weight $-1$, for
every $-W \leq i \leq -1$, and an edge $(v^{i-1}, v^i)$ of weight $1$,
for every $1 \leq i \leq W$.
Moreover, for every edge $(u, v)$ of weight $k$ in $G$, we add an edge $(u^k, v^0)$ of zero weight in $G'$. 

\ThmAPNP*
\begin{proof}
    The idea is to apply the reduction mentioned above and use
    the algorithm of Lemma~\ref{lm:apnpspecial} in $G'$. 
    Then, we claim that there exists a nonnegative prefix path from $u$ to $v$ in $G$ if and only if there exists a nonnegative prefix path from $u^0$ to $v^0$
    in $G'$.
    
    Regarding the running time, since
    the number of vertices of the new graph $G'$ after the reduction 
    becomes $\Theta(nW)$, the running time of the algorithm becomes
    $\tilde{O}((nW)^\omega)$. It remains to prove the correctness
    of the algorithm.

    Let $\pi$ be a nonnegative prefix path from $u$ to $v$ in $G$.
    We construct a path $\pi'$ from $u^0$ to $v^0$ in $G'$ as follows. For every edge $(a, b) \in \pi$ of 
    weight $k$, we add to $\pi'$ the unique subpath from $a^0$ 
    to $b^0$ of weight $k$ in $G'$, which exists by construction.
    Since $\pi$ is a nonnegative prefix path in $G$, and every subpath we add to $\pi'$ 
    consists either only of edges with weight in $\{-1, 0\}$
    or only of edges with weight in $\{0, +1\}$, we can infer that $\pi'$
    is a nonnegative prefix path from $u^0$ to $v^0$ in $G'$.
    
    For the other direction, let $\pi'$ be a nonnegative prefix path
    from $u^0$ to $v^0$ in $G'$. We construct a path $\pi$ from $u$ to $v$ in $G$ as follows. Let $a^0$ be the first vertex after $u^0$ along $\pi'$ such that, $a^0$ is
    the origin vertex of a different vertex than $u$ (i.e., $a^0$ is the origin
    of a vertex $a \neq u$). By construction, there exists an edge $(u, a)$ of
    weight $k$ in $G$, where $k$ is the weight of the subpath from $u^0$ to $a^0$
    in $\pi'$. We add the edge $(u, a)$ in $\pi$, and continue with the construction
    of $\pi$ by applying the same argument iteratively starting from $a^0$ until we reach $v^0$.
    Since $\pi'$ is a nonnegative prefix path in $G'$, and each prefix of $\pi$
    corresponds to a prefix in $\pi'$, we can infer that $\pi$ is a nonnegative
    prefix path from $u$ to $v$ in $G$.

    Therefore, the pair of vertices $\{u^0, v^0\}$ in $G'$ contains the information for the pair of vertices $\{u, v\}$ in $G$,
    and so the claim follows.
\end{proof}

\subsection{Lower bound for All-Pairs Nonnegative Prefix Paths}
We prove a lower bound on the running time of All-Pairs Nonnegative Prefix Paths problem under the APSP Hypothesis.
The APSP Hypothesis is an assertions that the All-Pairs Shortest Paths (APSP) problem cannot be solved in truly
subcubic $O(n^{3-\epsilon})$ time, for any $\epsilon > 0$. Vassilevska Williams and Williams \cite{williams2010subcubic}
proved that APSP and Negative Triangle are equivalent under subcubic reductions. The Negative Triangle problem
is defined as follows. Given a graph $G = (V, E, w)$, the goal is to find three vertices $a, b, c$ such that
$w(a, b) + w(b, c) + w(c, a) < 0$, that is, the vertices $a, b, c$ form a negative weight cycle.

Recently, a reduction from the Negative Triangle problem to the $h\text{-hop-bounded}$ s-t path problem
was given by Polak and Kociumaka~\cite{polak2022bellman}, in order to prove a hardness result for the latter.
Motivated by this reduction, we also reduce the Negative Triangle problem to the All-Pairs Nonnegative Prefix Paths problem to obtain
a hardness result for the All-Pairs Nonnegative Prefix Paths problem, as shown in Theorem~\ref{th:APNPhard}.

We first provide an auxiliary lemma, which we also
use later in Lemma~\ref{lm:Alice0}.

\begin{lemma} \label{lem:nonegcyclehasnonnegpref}
    Given a graph $G = (V, E, w)$, 
    let $C$ be a nonnegative weight cycle in $G$ (i.e., $w(C) \geq 0$).
    Then, there is a vertex $u \in C$ in the cycle, such that there exists a nonnegative prefix path in $G$ from $u$ to itself along $C$.
\end{lemma}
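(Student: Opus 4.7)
The plan is to use the standard ``rotate to the minimum prefix'' trick: among all vertices of $C$, pick the one whose partial sum along a fixed traversal is smallest, and verify that starting the traversal there yields only nonnegative partial sums.

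More concretely, fix an arbitrary enumeration $C = v_0, v_1, \ldots, v_{k-1}, v_0$ of the cycle, and define the partial sums
\[
s_i := \sum_{t=0}^{i-1} w(v_t, v_{t+1}) \quad \text{for } 0 \leq i \leq k,
\]
so that $s_0 = 0$ and $s_k = w(C) \geq 0$ by assumption. Let $j \in \{0, 1, \ldots, k-1\}$ be an index that minimizes $s_j$, and set $u := v_j$. Since $s_0 = 0$ is a candidate, note in particular that $s_j \leq 0$.

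I then claim that the cyclic traversal starting at $u$ and going once around $C$ back to $u$ is a nonnegative prefix path from $u$ to itself. For any prefix of length $\ell \in \{1, \ldots, k\}$, I split into two cases. If $\ell \leq k - j$, the prefix weight equals $s_{j+\ell} - s_j$, which is $\geq 0$ by the minimality of $s_j$. If $\ell > k - j$, the prefix wraps around at $v_{k-1} v_0$, so the traversed vertices are $v_j, v_{j+1}, \ldots, v_{k-1}, v_0, v_1, \ldots, v_{\ell - (k-j)}$, and the prefix weight equals
\[
(s_k - s_j) + s_{\ell - (k-j)} = w(C) + \bigl(s_{\ell - (k-j)} - s_j\bigr) \geq 0,
\]
using $w(C) \geq 0$ and, again, the minimality of $s_j$ (noting $\ell - (k-j) \in \{1, \ldots, j\}$, so $s_{\ell-(k-j)} \geq s_j$). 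This covers every prefix, so the cyclic traversal from $u$ around $C$ back to $u$ is a nonnegative prefix path, as required.

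There is really no genuine obstacle here: the only thing to be careful about is the index bookkeeping when the prefix wraps around past $v_{k-1}$, and using $w(C) \geq 0$ precisely in that wrap-around case to absorb the possibly positive gap $s_{\ell-(k-j)} - s_j$ into a nonnegative quantity. Everything else is a direct application of the minimum-prefix-sum argument.
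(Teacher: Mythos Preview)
Your proof is correct and is essentially the same rotation argument as the paper's: the paper phrases it as ``take the most negative subpath $Q$ and start from the first vertex of its complement $Q'$,'' whereas you phrase it as ``take the index $j$ minimizing the partial sum $s_j$ and start from $v_j$.'' Both are the standard minimum-prefix-sum trick, and your explicit index bookkeeping (including the wrap-around case) is arguably cleaner than the paper's more informal subpath language. One tiny nit: when $\ell = k-j$ you invoke ``minimality of $s_j$'' for $s_k \geq s_j$, but $k \notin \{0,\dots,k-1\}$; this is still fine since you already noted $s_j \leq 0 \leq s_k$.
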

\begin{proof}
    Let $Q \subseteq C$ be a 
    subpath of $C$ with the most negative total weight, and $Q'$ be the rest of $C$ (i.e., $Q \cup Q' = C$). Notice that the weight of all prefixes in $Q'$
    must be nonnegative, otherwise this negative weight prefix could be merged with $Q$, contradicting the fact that $Q$ is the subpath of $C$
    with the most negative total weight. Moreover, as $w(C) \geq 0$ we have that $w(Q') \geq -w(Q)$.
    Since by definition of $Q$, there is no prefix of $Q$ with more negative total weight, it holds that $Q' \cup Q$
    is a nonnegative prefix path from the first vertex of $Q'$ to itself along $C$.
\end{proof}

\ThmAPNPhard*
\begin{proof}
    Consider a Negative Triangle instance $G = (V, E)$. We create a directed graph $G_1=(V_1,E_1)$ as follows.
    The vertex set $V_1$ of $G_1$ consists of five copies of all vertices, i.e., $V_1:= \{ v^i : v\in V, i\in \{1,2,3,4,5\}\} $.
    %For each vertex $v \in V$, let $v^i$ be the corresponding vertex in $G_1$ in the $i_\text{th}$ copy. 
    For every edge $(u, v) \in E$ of weight $w(u, v)$, we add an edge $(u^i, v^{i+1})$ to $E_1$ with weight $-w(u, v)$, for $1 \leq i < 4$. Also for each vertex $v \in V$, we add an edge $(v^4, v^5)$ of weight $w_\text{min} = -1$.
    
    We claim that there exists a negative weight triangle in $G$ if and only
    if there is a vertex $v \in V$ such that there exists a nonnegative prefix path from $v^1$ to $v^5$ in $G_1$.
    In this case, since the reduction is subcubic and the time to check all vertices in $G_1$ is $O(n)$, an $O(n^{3-\epsilon})$ time algorithm
    for the All-Pairs Nonnegative Prefix Paths problem would imply an $O(n^{3-\epsilon})$ time algorithm
    for the Negative Triangle problem, for any $\epsilon > 0$, contradicting the APSP Hypothesis. 
    
    We proceed with the proof of the claim. 
    Suppose that there are three vertices $a, b, c$ that form a negative
    weight cycle $C$ in $G$, and let $G_2$ be the graph $G$ after flipping
    the sign of the weights. Then we have that $w_{G_2}(C) > 0$ in $G_2$,
    and based on Lemma~\ref{lem:nonegcyclehasnonnegpref}
    there is a vertex $v \in C$, such that there exists a nonnegative
    prefix path in $G_2$ from $v$ to itself along $C$.
    Notice that $v$ can be either $a, b$ or $c$, and by
    construction, the paths $a^1b^2c^3a^4a^5$,
    $b^1c^2a^3b^4b^5$, and $c^1a^2b^3c^4c^5$ exist in $G_1$. Thus
    without loss of generality, we can assume that $v$ is $a$
    and we use the path $a^1b^2c^3a^4a^5$ in $G_1$.
    By construction, it holds that $w_{G_1}(a^1, b^2) = w_{G_2}(a, b),
    w_{G_1}(b^2, c^3) = w_{G_2}(b, c)$, $w_{G_1}(c^3, a^4) = w_{G_2}(c, a)$
    and $w_{G_1}(a^4, a^5) = w_\text{min}$.
    The path $abca$ is a nonnegative prefix path
    in $G_2$, and so the path
    $a^1b^2c^3a^4$ is
    a nonnegative prefix path in $G_1$ as well. Moreover since $w_{G_2}(C) > 0$, we have that $w_{G_2}(C) \geq
    -w_\text{min}$, which implies
    that: \[w_{G_1}(a^1, b^2) + w_{G_1}(b^2, c^3) + w_{G_1}(c^3, a^4)
    \geq -w_\text{min}.\]
    Thus, we can conclude that the path 
    $a^1b^2c^3a^4a^5$ is
    a nonnegative prefix path in $G_1$. 

    For the other direction, let $a^1b^2c^3a^4a^5$ be a nonnegative
    prefix path in $G_1$. By construction of $G_1$ and the fact that $G$ does not contain
    self-loops, it must be the case that the corresponding vertices $a, b, c$ must be pairwise
    different in $G$. By definition of a nonnegative
    prefix path, it holds that: \[w_{G_1}(a^1, b^2) + w_{G_1}(b^2, c^3) + w_{G_1}(c^3, a^4) \geq -w_\text{min} > 0.\]
    By construction, we have that $w(a, b) = -w_{G_1}(a^1, b^2),
    w(b, c) = -w_{G_1}(b^2, c^3)$ and $w(c, a) = -w_{G_1}(c^3, a^4)$.
    Therefore, it is true
    that $w(a, b) + w(b, c) + w(c, a) < 0$, and 
    the vertices $a, b, c$ form a negative weight cycle in $G$.
\end{proof}

\section{The All-Alice Case} \label{sec:all-alice}
In this section, we develop an algorithm that computes the minimum sufficient energies of all vertices for game graphs controlled by Alice.
In particular, we obtain the following result.

\ThmAllAlice*

The idea is to use the algorithm of Theorem~\ref{th:apnpW} for the All-Pairs Nonnegative Prefix Paths problem. Hélouët, Markey, and Raha~\cite{HMR19} provide a relevant reduction
from the problem of whether zero energy suffices to the problem of whether there exists a nonnegative prefix path.
Hence, one idea would be to apply this reduction and run the algorithm of Theorem~\ref{th:apnpW}.
Unfortunately this reduction affects the weights, and the maximum weight of the new instance can be as big as $mW$, which
in turn affects the running time of the algorithm.

To that end, we present another way to use the All-Pairs Nonnegative Prefix Paths algorithm of Theorem~\ref{th:apnpW} without affecting the maximum weight of the graph.
The algorithm consists of two phases. In the first phase, we detect all the vertices such that initial zero energy suffices,
and in the second phase we compute the minimum sufficient energy for the rest of the vertices.

In the first phase of the algorithm, initially we run the All-Pairs Nonnegative Prefix Paths algorithm of Theorem~\ref{th:apnpW} on the game graph $G = (V, E, w)$.
Hence, we retrieve the information of whether there exists a nonnegative prefix path from a vertex $u$ to a vertex $v$,
for any two vertices $u, v \in V \times V$. Then for each vertex $v \in V$, we check whether there is a vertex $u$ (including $v$) such that
there exists a nonnegative prefix path from $v$ to $u$ and from $u$ to $u$. If this is the case, then we add this vertex to a set $Z$.
The next lemma shows that the set $Z$ is actually the set of all vertices such that initial energy zero suffices.
%\tijn{recall or refer to def of $e^*$}

\begin{lemma}\label{lm:Alice0}
    The set $Z$ is the same as the set $\{v \in V: e^*(v)=0\}$, and is computed in $\tilde{O}(n^\omega W^\omega)$ time.
\end{lemma}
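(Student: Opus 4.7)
The plan is to establish the set equality $Z = \{v \in V : e^*(v) = 0\}$ by proving both inclusions, and then argue the running time bound as an immediate consequence of Theorem~\ref{th:apnpW}.

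For the forward direction ($Z \subseteq \{v : e^*(v) = 0\}$), I would fix $v \in Z$ together with a witness $u$, a nonnegative prefix path $P$ from $v$ to $u$, and a nonnegative prefix cycle $C$ from $u$ to $u$. I then exhibit an explicit Alice strategy: first follow $P$, then loop $C$ forever. Because $P$ is a nonnegative prefix path, the energy is nonnegative throughout $P$ and ends at value $w(P) \geq 0$ at $u$. Each traversal of $C$ adds $w(C) \geq 0$ to the baseline energy at $u$, and within each traversal the prefix sums are nonnegative, so the energy stays $\geq 0$ at every round. Hence $e^*(v) = 0$.

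For the reverse direction, suppose $e^*(v) = 0$ and fix an optimal positional strategy $\sigma^*$ for Alice. In the subgraph $G(\sigma^*)$ every vertex has exactly one out-neighbor, so the walk starting at $v$ is a lasso: a simple path $v = v_0, v_1, \dots, v_j$ followed by a cycle $C$. Since $e^*(v) = 0$, every prefix of this infinite walk has nonnegative total weight; in particular $w(C) \geq 0$, since otherwise repeating $C$ would drive the energy to $-\infty$. Now invoke Lemma~\ref{lem:nonegcyclehasnonnegpref} on $C$ to get a vertex $u \in C$ such that traversing $C$ starting at $u$ is a nonnegative prefix path from $u$ to itself. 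The portion of the walk from $v$ up to the first visit of $u$ is a prefix of the whole walk, so all of its prefix sums are nonnegative, giving a nonnegative prefix path from $v$ to $u$. This witnesses $v \in Z$.

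For the running time, Theorem~\ref{th:apnpW} computes, in $\tilde{O}(n^\omega W^\omega)$ time, the reachability information needed to decide, for every ordered pair of vertices, whether a nonnegative prefix path exists; deciding membership of each $v$ in $Z$ then costs $O(n)$ by scanning candidate witnesses $u$, for a total of $O(n^2)$ extra work, which is absorbed in $\tilde{O}(n^\omega W^\omega)$.

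The main obstacle is the reverse inclusion: it requires chaining three facts, namely that an optimal positional strategy produces a lasso, that the cycle of that lasso is necessarily nonnegative-weight (a consequence of the infinite-walk guarantee $e^*(v) = 0$), and that Lemma~\ref{lem:nonegcyclehasnonnegpref} then supplies the correct pivot vertex $u$ on the cycle that is simultaneously reachable from $v$ via a nonnegative prefix path. A minor technical point to flag is that the condition ``nonnegative prefix path from $u$ to $u$'' must be read as a nontrivial cycle, which matches exactly what Lemma~\ref{lem:nonegcyclehasnonnegpref} produces and what the Alice strategy in the forward direction actually executes.
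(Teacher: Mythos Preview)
Your proof is correct and follows essentially the same approach as the paper: the forward direction concatenates the two witnessed nonnegative prefix paths into an infinite play, and the reverse direction extracts a lasso from an optimal positional strategy, observes its cycle must be nonnegative, applies Lemma~\ref{lem:nonegcyclehasnonnegpref} to locate $u$, and uses that the segment from $v$ to $u$ is itself a prefix of the nonnegative-prefix walk. Your write-up is in fact more explicit than the paper's (which simply asserts the existence of a nonnegative prefix lasso and invokes Lemma~\ref{lem:nonegcyclehasnonnegpref} without spelling out why the $v$-to-$u$ segment is a nonnegative prefix path), but the ideas are identical.
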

\begin{proof}
    Suppose that the algorithm adds a vertex $v$ to $Z$. Then, there must be a vertex $u$ (possibly $u = v$) such that there exists
    a nonnegative prefix path from $v$ to $u$ and from $u$ to $u$. By merging then these two paths, and by definition
    of minimum sufficient energy, we can conclude that $e^*(v) = 0$.

    Suppose now that the minimum sufficient energy of a vertex $v \in V$ is zero (i.e., $e^*(v) = 0$). By definition of minimum sufficient
    energy, there must exist a nonnegative prefix lasso $P$ which contains a nonnegative cycle $C$. Also by Lemma~\ref{lem:nonegcyclehasnonnegpref}, there is
    a vertex $u \in C$ in the cycle, such that there exists a nonnegative prefix path from $u$ to itself. As a result, the algorithm finds these vertices $v$ and $u$
    and adds $v$ to $Z$.

    The running time of this process is dominated by the running time of the All-Pairs Nonnegative Prefix Paths algorithm, which is $\tilde{O}(n^\omega W^\omega)$
    based on Theorem~\ref{th:apnpW}.
\end{proof}

The set $Z$ can be seen as the set of possible vertices to `end' in. Any optimal strategy would still have to define how to move from such a vertex $v\in Z$,
but since we know that $e^*(v)=0$, there has to be a path such that from this vertex no initial energy is necessary.
So the goal of the second phase, is to find for each vertex $v \in V \setminus Z$ the best way to hit a vertex in $Z$. 
The following lemma shows that this comes down to a shortest path computation. Brim and Chaloupka \cite{brim2012using} use a similar
idea inside their subroutine for the Mean-Payoff games.

\begin{restatable}{lemma}{LmAlice} \label{lm:Alicerem}
    Given a game graph $G = (V,E,w)$ where all vertices belong to Alice and the set $Z:=\{v\in V: e^*(v)=0\}$ is known, we can compute the remaining minimum sufficient energies through a single SSSP computation in $G$.   
\end{restatable}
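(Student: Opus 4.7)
The plan is to reduce the task to a single SSSP instance on a slight augmentation of $G$. First I would attach a new sink vertex $t$ to $G$: for each $u \in Z$ remove its original outgoing edges and replace them by a single zero-weight edge $(u, t)$. Call the result $G^{*}$. Intuitively, since $e^{*}(u) = 0$ for $u \in Z$, Alice can be viewed as safely exiting the game upon reaching $Z$, and the sink models this absorbing behaviour without changing the minimum sufficient energies on $V \setminus Z$.

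Next I would run SSSP on $(G^{*})^{R}$, the reverse of $G^{*}$ with every edge weight negated, sourced at $t$. The relaxation this algorithm performs is exactly $d(v) = \min_{u : (v,u) \in G^{*}}(d(u) - w(v,u))$, which is the recurrence I want to solve. The critical structural step is to argue that this SSSP instance has no negative cycle. Every cycle of $G^{*}$ lies inside $V \setminus Z$, because $t$ has no outgoing edges and each $u \in Z$ only has the edge to $t$; and by Lemma~\ref{lem:nonegcyclehasnonnegpref} every nonnegative cycle of $G$ contains a vertex of $Z$, so cycles confined to $V \setminus Z$ are strictly negative, and after negation they become strictly positive in $(G^{*})^{R}$. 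This clears the way for an SSSP algorithm tailored to negative weights in the absence of negative cycles (e.g., Bernstein--Nanongkai--Wulff-Nilsen), giving a single SSSP computation overall.

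To finish I would show $d \equiv e^{*}$ by matching recurrences. For $v \in Z$ the zero-weight edge to $t$ gives $d(v) = 0 = e^{*}(v)$, and for $v \notin Z$ the Bellman equation $e^{*}(v) = \min_{u}\max(0, e^{*}(u) - w(v,u))$ collapses to $\min_{u}(e^{*}(u) - w(v,u))$: an out-edge $(v,u)$ with $e^{*}(u) - w(v,u) \leq 0$ would force $e^{*}(v) = 0$ and thus $v \in Z$, contradicting $v \notin Z$. Hence $d$ and $e^{*}$ satisfy the same recurrence with the same boundary values, and therefore agree.

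The main obstacle is the structural claim that $V \setminus Z$ is free of nonnegative cycles; once this is established via Lemma~\ref{lem:nonegcyclehasnonnegpref} (combined with the characterization of $Z$ from Lemma~\ref{lm:Alice0}), the rest is routine SSSP machinery. A minor remaining point is the unreachable case: if $v$ cannot reach $Z$ in $G$ then $v$ cannot reach $t$ in $G^{*}$ and the SSSP correctly returns $d(v) = +\infty$, matching $e^{*}(v) = +\infty$.
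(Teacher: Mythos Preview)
Your proposal is correct and follows essentially the same reduction as the paper: redirect $Z$ to a fresh sink $t$, reverse the edges and negate the weights, invoke Lemma~\ref{lem:nonegcyclehasnonnegpref} to certify that the resulting instance has no negative cycles, and run a single SSSP from $t$. The only cosmetic difference is in the correctness argument---the paper proves $e^*(v) = -\delta(v,t)$ directly via two claims about maximum-weight paths, whereas you match the Bellman recurrence and appeal to fixed-point uniqueness---but the two arguments are interchangeable.
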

Recall that an SSSP computation is the execution of a single-source shortest paths (SSSP) algorithm.
For the proof we refer to the Appendix~\ref{app:missingproofs}. Together, Lemma~\ref{lm:Alice0} and Lemma~\ref{lm:Alicerem} prove Theorem~\ref{thm:allalice}, by using also
the fact that we can compute SSSP deterministically in $\tilde O(n^\omega W)$ time~\cite{Sankowski05,YusterZ05}. 

\section{The All-Bob Case} \label{sec:all-bob}

In this section, we restrict ourselves to the case where all vertices belong to Bob. We show that this special case admits a near-linear time algorithm, by essentially reducing the problem to detecting negative cycles and computing shortest paths. We obtain the following result. 

\ThmAllBob*
\begin{proof}
We split the algorithm and proof in two parts, depending on who wins the game in a particular vertex. The first part of the algorithm consists of identifying the vertices with infinite energy (namely,
the vertices where Bob wins), and the second part consists of calculating the finite energies of the remaining vertices (namely, the vertices where Bob loses).

\textbf{Vertices where Bob wins.}
First, we identify the vertices where Bob wins, i.e., the vertices~$v$ with $e^*(v)=\infty$. Hereto, we decompose $G$ in to strongly connected components $C_1, \dots, C_r$, for some $r\geq 1$. On each $C_i$, we run a negative cycle detection algorithm. If there is a negative cycle, we set $e(v)=\infty$ for all $v\in C_i$. Next we find the vertices that can reach these cycles. Let $A:=\{v\in V : e(v)=\infty\}$ be the union of the 
strongly connected components with a negative cycle. Then from $A$ we run an inward reachability algorithm (e.g., DFS, BFS) towards each vertex $v$ and if there is a path from $v$ to $A$, we set $e(v)=\infty$.
In the correctness proof, we show that $e(v) = \infty$ if and only if Bob wins at $v$.

\textit{Correctness.} For any vertex~$v\in V$, Bob wins if and only if there is a path from $v$ to a negative cycle. Let $v$ be a vertex where Bob wins, and let $C^{(v)}$ be the negative cycle reachable from $v$. If $v$ belongs to the strongly connected component of $C^{(v)}$, then our algorithm outputs $e(v)=\infty$. If $v$ belongs to a different connected component, then the path to the negative cycle is detected in the inward reachability algorithm and we also output $e(v)=\infty$.

Suppose we output $e(v)=\infty$. If we do this because $v$ belongs to a strongly connected in which we detected a negative cycle, then clearly there is path from $v$ to the negative cycle, and hence Bob wins at $v$. If we set $e(v)=\infty$ because there is a path from $v$ to $A$, then there is a path from $v$ towards a strongly connected component containing a negative cycle, and hence to a negative cycle itself. So again Bob wins at $v$. 

\textit{Running time.} We can decompose $G$ in to strongly connected components in $O(m)$ time~\cite{Tarjan72}. On each connected component $C_i$, we can detect whether there is a negative cycle in the graph in $O(|E(C_i)|\log^2 n\log nW\log\log n)$ time w.h.p.~\cite{bringmann2023negative}, thus the total time is $ O(m\log^2n\log nW\log\log n)$ w.h.p. The inward reachability algorithm can be implemented by a simple DFS or BFS in $O(m)$ time. Hence in total we obtain w.h.p\ a running time of $O(m\log^2 n\log nW\log\log n)$ for this part.

\textbf{Vertices where Bob loses.}
Second, we compute the correct value for the vertices where Bob loses, i.e., the vertices~$v$ with $e(v)<\infty$. Note that for this part we can restrict ourselves to the subgraph where we omit all vertices with $e(v)=\infty$. We 
also add a new sink vertex $t$ to the graph, and for every $v \in V$ we insert an edge $(v, t)$ with $w(v, t) = 0$. Now for each vertex~$v$, we compute the minimum distance $d(v,t)$ from $v$ to $t$, and  we set $e(v) = \max\{-d(v,t), 0\}$. In the correctness proof, we show that $e^*(v) = e(v)$ for each $v \in V$ with $e(v) < \infty$.

\textit{Correctness.}
%We write $d(v,u)$ for the minimum distance from $v$ to $u$, that is, the length of the shortest path. 
Consider now a vertex $v$ such that $e(v) < \infty$.
First we show that $e^*(v) \geq e(v)$. Let $u$ be the last vertex (excluding $t$ itself) on the shortest path from $v$ to $t$, and $P_{v, u}$
be the corresponding prefix from $v$ to $u$. Then Bob can choose to move along the path $P_{v, u}$ forcing Alice to use at least $\max\{-w(P_{v, u}), 0\}$
initial energy. As $d(v,t) = w(P_{v, u}) + w(u, t) = w(P_{v, u}) + 0 = w(P_{v, u})$, we conclude that Alice needs at
least $\max\{-d(v,t), 0\} = e(v)$ initial energy.

It remains to show $e^*(v) \leq e(v)$. Since there are no negative cycles, by definition we have that $e^*(v) = \max\{-\min_{u \in V} w(P_u), 0\}$,
where the minimization is over all the simple paths from $v$ to $u$. Also for all $u \in V$, it holds that $d(v, u) \leq w(P_u)$ and $d(v,t) \leq d(v, u) + w(u, t)
= d(v, u) + 0 = d(v, u)$.
Thus we get that $e^*(v) = \max\{-\min_{u \in V} w(P_u), 0\} \leq \max\{-\min_{u \in V} d(v,t), 0\} = \max\{-d(v,t), 0\} = e(v)$.

% \antonis{-----------------------------------------}
% Let $v\in V$ be a vertex with $d(v,t)=0$. That means that for every vertex $u\in V$ reachable by $v$\footnote{In the all-Alice cases we looked at a special type of reachability, where the path had to have no negative prefixes. In this case we consider standard reachability, i.e., we allow all paths.}, we have $d(v,u)\geq 0$, since $d(v,u)+0=d(v,u)+w(u,t)\geq d(v,t) =d(v,t) \geq 0$. Since we know $v$ cannot reach a negative cycle, it must reach a non-negative cycle at some point. Each non-negative cycle has at least one vertex~$u$ with $e^*(u)=0$\tijn{prove this in a lemma?}, and since $d(v,u)=0$, we get $e^*(v)=0=-d(v,t)$. 

% Now let $v$ be any vertex in $V$. Let $u$ be the last vertex (excluding $t$ itself) on the shortest path from $v$ to $t$. Then we have $d(v,u)=d(v,t)$ and $\delta(u)=0$. Clearly the path from $v$ to $u$, and then the path from $u$ towards its positive cycle is a valid strategy for Bob, hence $e^*(v) \geq -d(v,t)$. Now let $\tau^*$ be an optimal strategy for Bob. Then, since there are no negative cycles, $\tau^*$ send $v$ towards a non-negative cycles. In particular, $\tau^*$ sends $v$ to a vertex~$u$ with $e^*(u)=0=\delta(u)$. Since $d(v,t)$ is the shortest distance from $v$ to $t$, we obtain $d(v,t) \leq d(v,u)+w(u,t)=d(v,u)=e^*(v)$. So we obtain $e^*(v)=d(v,t)$ for all $v$. 

\textit{Running time.}
    To compute the shortest paths from $v$ to $t$, we flip the direction of all the edges and we compute the minimum distances from $t$ to $v$ in the new graph. This clearly corresponds to the minimum distances from $v$ to $t$ in the original graph. Since this computation is the negative weight single source shortest path problem, it can be done in $O(m\log^2 n\log nW\log\log n)$ time w.h.p.~\cite{bringmann2023negative}.
\end{proof}

% \begin{lemma} \label{lem:pos_cycle_exist_zero}
    
% \end{lemma}

\section{Game Graphs Without Negative Cycles}\label{sc:nonegcycles}
In this section, we provide an $O(mn)$ time algorithm for the special case where the game graph has no negative cycles. We do this in three steps: first, we introduce a finite duration energy game, where a token is passed for $i$ rounds. The goal is to compute for each vertex, the minimum initial energy that Alice needs in order to keep the energy nonnegative for those $i$ rounds. Second, we provide an algorithm that computes this value in $O(mi)$ time. Finally, we show that for graphs with no negative cycles, it suffices to find this minimum initial energy for a game of $n$ rounds.

%We analyze a version of the `value iteration algorithm'\tijn{should explain what that means}, and show that $n$ iterations suffice for such graphs. In total this gives an $O(mn)$ time algorithm. 
%For the analysis of the value iteration algorithm in this section, we make use of the concept of a game lasting $i$ rounds.

\subsection{Finite Duration Games}
We introduce a version of the energy game that lasts $i$ rounds. We define strategies and energy functions analogous to the infinite duration game, as in Section~\ref{sc:prelim}. A strategy for Alice is a function $\sigma_i: V^*V_A \rightarrow V$, such that
for all finite paths $u_0u_1\cdots u_j$ with $j < i$ and $u_j \in V_A$, we have that
$\sigma_i(u_0u_1\cdots u_j) = v$ for some edge $(u_j, v) \in E$. Similarly we define a strategy $\tau_i$ for Bob by replacing
$V_A$ with $V_B$. A path $u_0u_1\cdots u_j$ of length $j$ is consistent with respect to strategies $\sigma_i$ and $\tau_i$, if $\sigma_i(u_0u_1\cdots u_k) = u_{k+1}$ for all $u_k \in V_A$ and $\tau_i(u_0u_1\cdots u_k) = u_{k+1}$ for all $u_k \in V_B$, where $0 \leq k < j \leq i$.
The minimum sufficient energy at a vertex $u$ corresponding to strategies $\sigma_i$ and $\tau_i$ is defined
as $e_{\sigma_i, \tau_i}(u) := \max\{-\min w(P), 0\}$, where the minimization is over all the consistent paths $P$ with respect to $\sigma_i$
and $\tau_i$ of length at most $i$ originating at $u$. %\sebastian{Just a comment: Instead of consistent strategies we could also talk about prefixes directly} 
The minimum sufficient energy at a vertex $u$ is defined as follows:
$$e^*_i(u) := \min_{\sigma_i} \max_{\tau_i} e_{\sigma_i, \tau_i}(u),$$
where we minimize over all strategies $\sigma_i$ for Alice and maximize over all strategies $\tau_i$ for Bob. As for the infinite duration game, we know by Martin's determinacy theorem~\cite{Martin75} that $ \min_{\sigma_i} \max_{\tau_i} e_{\sigma_i, \tau_i}(u)= \max_{\tau_i}\min_{\sigma_i} e_{\sigma_i, \tau_i}(u)$. Now we define \emph{optimal strategies} as follows. 
A strategy $\sigma^*_i$ is an optimal strategy for Alice at a vertex $u$, if for any strategy $\tau_i$ for Bob it holds that $e_{\sigma^*_i, \tau_i}(u) \leq e^*_i(u)$.
Likewise a strategy $\tau^*_i$ is an optimal strategy for Bob at a vertex $u$, if for any strategy $\sigma_i$ for Alice it holds that $e_{\sigma_i, \tau^*_i}(u) \geq e^*_i(u)$.
A value $e(u)$ is a sufficient energy at a vertex $u$, if there exists a strategy $\sigma_i$ such that for any strategy $\tau_i$, it holds that
$e_{\sigma_i, \tau_i}(u) \leq e(u)$. In this case, observe that the following is true:
$$e^*_i(u) = \max_{\tau_i} e_{\sigma^*_i, \tau_i}(u) \leq \max_{\tau_i} e_{\sigma_i, \tau_i}(u) \leq e(u).$$

% The following lemma shows that the history of a path does not matter for the best next move. The only thing that matters is how many turns are left in the game. 

% \begin{lemma} \label{lem:strat_prop}
%     Consider a game of $i$ rounds, with optimal strategies $\sigma_i^*$ and $\tau_i^*$ for Alice and Bob respectively. Let $u_0u_1\cdots u_j$ be a finite path, consistent with $\sigma_i^*$ and $\tau_i^*$, where $j < i$ and $u_j \in V_A$. Then there exists an
%     optimal strategy $\sigma^*_{i-j}$ such that $\sigma^*_i(u_0u_1\cdots u_j) = \sigma^*_{i-j}(u_j)$.
%     Similarly if $u_j \in V_B$, there exists an optimal strategy $\tau^*_{i-j}$ such that $\tau^*_i(u_0u_1\cdots u_j) = \tau^*_{i-j}(u_j)$.
% \end{lemma}

Next, we show the following lemma about the minimum energy function, a similar version has also been used for the infinite duration game in~\cite{BrimCDGR11} and~\cite{ChatterjeeHKN14}. For the proof, see Appendix~\ref{app:missingproofs}.%The main advantage of such a characterization is that we can locally check whether an arbitrary energy function is a sufficient energy.

\begin{restatable}{lemma}{LmOptiChar} \label{lem:optiprop}
    Given a game of $i$ rounds and a vertex $u \in V$, the energy $e^*_i(u)$ satisfies the following properties:
    \begin{align} 
        \text{if } u \in V_A \text{ then } &\exists v \in N^+(u): e^*_i(u) + w(u, v) \geq e^*_{i-1}(v) \label{eq:Alice}\\
        \text{if } u \in V_B \text{ then } &\forall v \in N^+(u): e^*_i(u) + w(u, v) \geq e^*_{i-1}(v) \label{eq:Bob}
    \end{align}
\end{restatable}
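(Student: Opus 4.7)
The plan is to exploit the natural one-step decomposition of the $i$-round game: from a vertex $u$, whichever player controls $u$ picks an outgoing edge $(u,v)$, and the remainder is an $(i-1)$-round game starting at $v$. Formally, given strategies $\sigma_i,\tau_i$ whose first move from $u$ sends the token to $v$, I define their \emph{restrictions} on the residual game by $\sigma'(vu_2\cdots u_k):=\sigma_i(uvu_2\cdots u_k)$ and $\tau'(vu_2\cdots u_k):=\tau_i(uvu_2\cdots u_k)$; conversely any residual-game strategy lifts to a strategy on the full game. The consistent paths from $u$ of length at most $i$ under $(\sigma_i,\tau_i)$ are then precisely the trivial path $u$ together with the extensions $uQ$, where $Q$ is a consistent path from $v$ of length at most $i-1$ under $(\sigma',\tau')$. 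This yields the key recurrence
\[
    e_{\sigma_i,\tau_i}(u) \;=\; \max\bigl\{0,\; e_{\sigma',\tau'}(v)-w(u,v)\bigr\}.
\]

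For the Alice case, let $\sigma^*_i$ be an optimal strategy for Alice and set $v:=\sigma^*_i(u)\in N^+(u)$; this is the neighbor for which the claim will hold. Because $u\in V_A$, Bob makes no move in round $1$, so as $\tau_i$ ranges over all Bob strategies its restriction $\tau'$ ranges over all Bob strategies on the $(i-1)$-round game at $v$. Combining this with the recurrence,
\[
    e^*_i(u) \;=\; \max_{\tau_i} e_{\sigma^*_i,\tau_i}(u) \;\geq\; \max_{\tau'}\bigl(e_{\sigma'^*,\tau'}(v)-w(u,v)\bigr) \;\geq\; e^*_{i-1}(v)-w(u,v),
\]
where $\sigma'^*$ is the restriction of $\sigma^*_i$ and the last step uses $e^*_{i-1}(v)=\min_{\sigma'}\max_{\tau'} e_{\sigma',\tau'}(v)\leq \max_{\tau'} e_{\sigma'^*,\tau'}(v)$. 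Rearranging gives $e^*_i(u)+w(u,v)\geq e^*_{i-1}(v)$, the required existential statement.

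For the Bob case, fix an arbitrary $v\in N^+(u)$ and let $\tau^*_{i-1}$ be an optimal Bob strategy on the $(i-1)$-round game at $v$. Lift it to a Bob strategy $\tau$ that plays $(u,v)$ in round $1$ and then mimics $\tau^*_{i-1}$. For any Alice strategy $\sigma_i$ with restriction $\sigma'$, the recurrence yields
\[
    e_{\sigma_i,\tau}(u) \;\geq\; e_{\sigma',\tau^*_{i-1}}(v)-w(u,v) \;\geq\; e^*_{i-1}(v)-w(u,v),
\]
the last inequality being the defining property of an optimal Bob strategy ($e_{\sigma',\tau^*_{i-1}}(v)\geq e^*_{i-1}(v)$ for every $\sigma'$). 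Since $e^*_i(u)\geq\min_{\sigma_i} e_{\sigma_i,\tau}(u)$ by Martin's determinacy, and $e^*_i(u)\geq 0$ covers the case in which the right-hand side is negative, we conclude $e^*_i(u)+w(u,v)\geq e^*_{i-1}(v)$ for every $v\in N^+(u)$.

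The bulk of the work lies in the first paragraph, in carefully verifying the restriction/lifting correspondence together with the path decomposition that underpins the recurrence. Once the recurrence is in hand, both cases collapse to a one-line min--max manipulation, distinguished only by which player has the freedom to choose the first edge.
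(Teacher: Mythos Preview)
Your proof is correct and follows essentially the same one-step decomposition as the paper: both arguments split off the first edge out of $u$ and reduce to the $(i-1)$-round game at the chosen neighbor. The only difference is packaging---you isolate the recurrence $e_{\sigma_i,\tau_i}(u)=\max\{0,\,e_{\sigma',\tau'}(v)-w(u,v)\}$ at the level of fixed strategies and then take $\min/\max$, whereas the paper carries the $\min_\sigma\max_\tau$ through the path-weight expression directly and handles the $\max\{0,\cdot\}$ via an explicit case split; the underlying inequalities are identical.
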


\subsection{A Value Iteration Algorithm for Finite Duration Games}
In this section, we present Algorithm~\ref{alg:valiteralg}, a value iteration algorithm for a game lasting $i$~rounds that computes for each vertex $u \in V$ the value $e^*_i(u)$.
%If we know that $M$ is an upper bound on the number of iterations necessary. Using the trivial upper bound $M=n^2W$, we obtain an $O(mn^2W)$ time algorithm for the general case. 
We note that Algorithm~\ref{alg:valiteralg} consists of $i$ steps, where at every step each edge is scanned at most once. Clearly this means the algorithm takes $O(mi)$ time.

\begin{algorithm}
\caption{Value iteration algorithm for an $i$-round game}\label{alg:valiteralg}
\setcounter{AlgoLine}{0}
\SetAlgoLined
\SetKwComment{Comment}{/* }{ */}
\SetKwInput{KwInput}{Input}
\SetKwInput{KwOutput}{Output}

\KwInput{A game graph $G = (V, E, w, \langle V_A, V_B \rangle)$, a number of iterations $i$}
\KwOutput{The minimum sufficient energy $e_i(u)$ of each $u \in V$, in order to play the game for $i$ rounds}

$\forall u \in V: e_0(u) \gets 0$

\For{$j = 1\; \KwTo \;i$} {
    \ForEach{$u \in V$} {
        \If {$u \in V_A$} {
            $e_j(u) \gets \max\{\min_{(u, v) \in E} \{e_{j-1}(v) - w(u, v)\}, 0\}$ \phantomsection \label{line:valiterAlice}
        }
        
        \If{$u \in V_B$} {
            $e_j(u) \gets \max\{\max_{(u, v) \in E} \{e_{j-1}(v) - w(u, v)\}, 0\}$ \phantomsection \label{line:valiterBob}
        }
    }
}

\textbf{return} $e_i$
\end{algorithm}

\begin{lemma} \label{lem:invalg}
    Let $e_i(\cdot)$ be the function returned by Algorithm~\ref{alg:valiteralg}, then $e_i(u) = e^*_i(u)$ for all $u \in V$.
\end{lemma}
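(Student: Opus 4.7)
The plan is to prove the lemma by induction on the number of iterations $i$. The base case $i = 0$ is immediate, since both $e_0(u)$ and $e^*_0(u)$ equal $0$: no moves are made in a zero-round game. For the inductive step I assume $e_{i-1}(v) = e^*_{i-1}(v)$ for every $v \in V$ and establish the two inequalities $e^*_i(u) \leq e_i(u)$ and $e^*_i(u) \geq e_i(u)$ separately, by exhibiting explicit strategies that witness each direction.

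For the sufficiency direction $e^*_i(u) \leq e_i(u)$, I will construct an Alice strategy that survives $i$ rounds starting from $u$ with energy $e_i(u)$. If $u \in V_A$, Alice first moves to a neighbor $v^*$ attaining the minimum in line~\ref{line:valiterAlice}, so that $e_i(u) + w(u,v^*) \geq e_{i-1}(v^*)$. If $u \in V_B$, then for whichever neighbor $v$ Bob picks, the maximum in line~\ref{line:valiterBob} guarantees $e_i(u) + w(u,v) \geq e_{i-1}(v)$. In both subcases the energy after the first edge is at least $e_{i-1}(v) = e^*_{i-1}(v) \geq 0$, so the move is legal and Alice can continue with the inductively optimal $(i-1)$-round strategy $\sigma^*_{i-1}$ from $v$ to handle the remaining rounds.

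For the necessity direction $e^*_i(u) \geq e_i(u)$, I will construct a Bob strategy that forces at least $e_i(u)$ initial energy. The case $e_i(u) = 0$ is immediate since minimum sufficient energies are nonnegative by definition, so assume $e_i(u) > 0$. If $u \in V_B$, Bob moves to a neighbor $v^*$ maximizing $e_{i-1}(v) - w(u,v)$ and then plays $\tau^*_{i-1}$; for any Alice initial energy $x < e_i(u) = e_{i-1}(v^*) - w(u,v^*)$, one has $x + w(u,v^*) < e^*_{i-1}(v^*)$, so either the first edge already drops the energy below zero or the inductive hypothesis at $v^*$ yields a Bob win. If $u \in V_A$, Bob waits for Alice's first move to $v$ and responds with $\tau^*_{i-1}$; since $e_i(u) \leq e_{i-1}(v) - w(u,v)$ for every $v$, any $x < e_i(u)$ again satisfies $x + w(u,v) < e^*_{i-1}(v)$, and Bob wins by induction.

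The main subtlety will be the $\max\{\cdot,0\}$ clipping in Algorithm~\ref{alg:valiteralg}: one must check that when $e_i(u) = 0$ but the unclipped minimum (for $u \in V_A$) or maximum (for $u \in V_B$) is strictly negative, Alice's first move still keeps the energy nonnegative. This holds because, in that regime, $e_{i-1}(v) - w(u,v) \leq 0$ for the relevant $v$, forcing $w(u,v) \geq e_{i-1}(v) \geq 0$, so the zero initial energy survives the first edge and arrives at $v$ with at least $e^*_{i-1}(v)$ energy. Once this boundary case is handled, the two inequalities combine to give $e_i(u) = e^*_i(u)$ for every $u$, closing the induction.
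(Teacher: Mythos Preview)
Your proposal is correct and follows the same inductive skeleton as the paper's proof: induction on $i$, the same base case, and separate proofs of $e^*_i(u)\le e_i(u)$ and $e^*_i(u)\ge e_i(u)$. The sufficiency direction (building an Alice strategy that first moves to the minimizer and then plays $\sigma^*_{i-1}$) is exactly what the paper does.

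The one genuine difference is in the necessity direction. The paper does not build a Bob strategy; instead it invokes Lemma~\ref{lem:optiprop}, which records the Bellman-type inequalities $e^*_i(u)\ge e^*_{i-1}(v)-w(u,v)$ (for some $v$ when $u\in V_A$, for all $v$ when $u\in V_B$), and then finishes with a one-line algebraic comparison against the update rule of the algorithm. Your route---exhibiting an explicit Bob strategy (move to the maximizer if $u\in V_B$, otherwise wait for Alice's move, then play $\tau^*_{i-1}$) and arguing that any initial energy below $e_i(u)$ fails---amounts to inlining the proof of Lemma~\ref{lem:optiprop} into the induction. Both arguments are valid; the paper's is slightly shorter once the lemma is in hand, while yours is self-contained and makes the operational meaning of the lower bound more explicit. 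Your closing remark about the $\max\{\cdot,0\}$ clipping is fine but in fact already absorbed by the inequality $e_i(u)\ge e_{i-1}(v^*)-w(u,v^*)$, which holds regardless of whether the clip is active.
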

\begin{proof}
    We prove the claim by induction on $i$, which is both the number of steps of the algorithm and the duration of the game.
    
    \emph{Base case:} For $i = 0$ steps, the algorithm sets for each $u \in V: e_0(u) = 0 = e^*_0(u)$.
        
    \emph{Inductive Step:} We assume that after $i - 1$ steps $e_{i-1}(u) = e^*_{i-1}(u)$, and we prove that after $i$ steps $e_i(u) = e^*_i(u)$
        as well. We first show that $e_i(u) \geq e^*_i(u)$.
        
        Consider the case that $u \in V_A$. Let $v'$ be the neighbor that minimizes the relation in the $i_{th}$ step in Line~\ref{line:valiterAlice}.
        Then it holds that $e_i(u) + w(u, v') \geq e_{i-1}(v')$. Using the edge $(u, v')$
        with initial energy $e_i(u)$, Alice can move to $v'$ with remaining energy at least $e_{i-1}(v')$.
        By the inductive hypothesis it holds that $e_{i-1}(v') = e^*_{i-1}(v')$, so there exists an optimal strategy $\sigma^*_{i-1}$ such that
        for any strategy $\tau_{i-1}$, we have that $e_{\sigma^*_{i-1}, \tau_{i-1}}(v') \leq e_{i-1}(v')$.
        Define the strategy $\sigma_i$ in the following way: $\forall x \in V^*V_A: \sigma_i(ux) = \sigma^*_{i-1}(x)$ and $\sigma_i(u) = v'$.
        Then we get a strategy $\sigma_i$ such that for any strategy $\tau_i$, it holds that $e_{\sigma_i, \tau_i}(u) \leq e_i(u)$.
        This implies that $e_i(u)$ is a sufficient energy at vertex $u$, and so $e_i(u) \geq e^*_i(u)$.
        
        Consider the case that $u \in V_B$. Due to the $i_{th}$ step in Line~\ref{line:valiterBob}, it holds that $e_i(u) + w(u, v) \geq e_{i-1}(v)$,
        for all $v \in N^+(u)$. Hence for any choice of a neighboring edge $(u, v)$ with initial energy $e_i(u)$, Bob moves to a neighbor $v$ with remaining
        energy at least $e_{i-1}(v)$. By the inductive hypothesis, for all $v \in N^+(u)$ it holds that $e_{i-1}(v) = e^*_{i-1}(v)$, so there exists an 
        optimal strategy $\sigma^*_{i-1}$ such that
        for any strategy $\tau_{i-1}$, we have that $e_{\sigma^*_{i-1}, \tau_{i-1}}(v) \leq e_{i-1}(v)$. Define the strategy $\sigma_i$
        in the following way: $\forall x \in V^*V_A: \sigma_i(ux) = \sigma^*_{i-1}(x)$. Then we get a strategy $\sigma_i$ such that for any strategy $\tau_i$,
        it holds that $e_{\sigma_i, \tau_i}(u) \leq e_i(u)$. This implies that $e_i(u)$ is a sufficient energy at vertex $u$, and so $e_i(u) \geq e^*_i(u)$.
        
        It remains to show that $e_i(u) \leq e^*_i(u)$. Consider the case that $u \in V_A$. If $e_i(u) = 0$ then the claim holds trivially. If $e_i(u) > 0$, then based on
        Line~\ref{line:valiterAlice}, we have that $e_i(u) + w(u, v) \leq e_{i-1}(v)$ for all $v \in N^+(u)$. By Lemma~\ref{lem:optiprop}, 
        there exists $v' \in N^+(u)$ such that $e^*_i(u) + w(u, v') \geq e^*_{i-1}(v')$, which means that:
        $$e_i(u) + w(u, v') \leq e_{i-1}(v') = e^*_{i-1}(v') \leq e^*_i(u) + w(u, v') \; \Rightarrow \; e_i(u) \leq e^*_i(u),$$
        where the equality holds by the inductive hypothesis.
        
        Consider the case that $u \in V_B$. If $e_i(u) = 0$ then the claim holds trivially. Otherwise based on Line~\ref{line:valiterBob}, there exists $v' \in N^+(u)$
        such that $e_i(u) + w(u, v') = e_{i-1}(v')$. By Lemma~\ref{lem:optiprop}, we have that $e^*_i(u) + w(u, v) \geq e^*_{i-1}(v)$
        for all $v \in N^+(u)$, which means that:
        $$e_i(u) + w(u, v') = e_{i-1}(v') = e^*_{i-1}(v') \leq e^*_i(u) + w(u, v') \; \Rightarrow \; e_i(u) \leq e^*_i(u),$$
        where the equality holds by the inductive hypothesis.\qedhere
        %
        %Therefore we can conclude that $e_i(u) = e^*_i(u)$ for all $u \in V$.
\end{proof}

\subsection{No Negative Cycles}
The goal of this section is to show that for graphs with no negative cycles, it holds that $e^*_n(u) = e^*(u)$, for all $u \in V$. 
Hereto, we show in Lemma~\ref{lem:posstrnonneg} that as in the infinite duration game, positional strategies suffice when no negative cycles are present. 
%\antonis{explain what happens with negative cycles.}
In the proof, we use the following alternative characterization of $e_{\sigma_i, \tau_i}(u)$.

Let $\sigma_i$ and $\tau_i$ be strategies for Alice and Bob respectively, and let $u\in V$ be a vertex. Moreover, let $u_0u_1 \cdots u_j$ be the consistent path of length $j$ with respect to $\sigma_i$ and $\tau_i$, where
$u_0=u$. Then given an initial energy $e_{\text{init}}$, the energy level at vertex $u_j$ is equal to the value $e_{\text{init}} + \sum_{k=0}^{j-1} w(u_k, u_{k+1})$. 
We denote $e^*_{\rm{init}}(u)$ for the minimum nonnegative initial energy such that the energy level at each vertex of the corresponding consistent path of length $i$, is nonnegative. 
The following lemma shows that $e_{\sigma_i, \tau_i}(u)=e^*_{\rm{init}}(u)$ (for the proof, see Appendix~\ref{app:missingproofs}).

\begin{restatable}{lemma}{LmMinenLev} \label{lem:minenlev}
    For a vertex $u$ and two fixed strategies $\sigma_i$ and $\tau_i$, let $P$ be the consistent path with respect to $\sigma_i$ and $\tau_i$ of length $i$ originating at $u$.
    Then it holds that $e_{\sigma_i, \tau_i}(u)=e^*_{\rm{init}}(u)$. 
\end{restatable}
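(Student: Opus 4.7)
The plan is to observe that once we fix both strategies $\sigma_i$ and $\tau_i$, the game becomes deterministic, so there is a unique consistent path $P = u_0 u_1 \cdots u_i$ of length $i$ originating at $u_0 = u$. Moreover, every consistent path of length at most $i$ originating at $u$ is exactly some prefix $P_j = u_0 u_1 \cdots u_j$ for $0 \le j \le i$. This is the one structural fact the proof relies on, and it follows immediately from the definition of consistency together with the determinism of strategies.

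Next, I would introduce the partial sums $s_j := \sum_{k=0}^{j-1} w(u_k, u_{k+1})$ (with $s_0 = 0$) so that $w(P_j) = s_j$. Unfolding the definition of $e_{\sigma_i,\tau_i}(u)$ then gives
\[
e_{\sigma_i,\tau_i}(u) = \max\Bigl\{-\min_{0 \le j \le i} w(P_j),\,0\Bigr\} = \max\Bigl\{-\min_{0 \le j \le i} s_j,\,0\Bigr\}.
\]
On the other hand, the energy level at $u_j$ starting from initial energy $e_{\rm init}$ is exactly $e_{\rm init} + s_j$, so the condition that every such energy level be nonnegative is equivalent to $e_{\rm init} \ge -s_j$ for all $0 \le j \le i$, i.e.\ $e_{\rm init} \ge \max_{0 \le j \le i}(-s_j) = -\min_{0 \le j \le i} s_j$. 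Combining this with the nonnegativity requirement $e_{\rm init} \ge 0$ yields
\[
e^*_{\rm init}(u) = \max\Bigl\{-\min_{0 \le j \le i} s_j,\,0\Bigr\}.
\]

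Comparing the two expressions gives $e_{\sigma_i,\tau_i}(u) = e^*_{\rm init}(u)$, which is the desired identity. I do not foresee any genuine obstacle here: the lemma is a bookkeeping statement that simply reconciles two equivalent ways of describing the minimum sufficient initial energy along the unique play induced by a pair of strategies. The only point worth stating carefully is that the set of consistent paths of length at most $i$ coincides with the set of prefixes of $P$, since this is what allows the $\min$ over paths to be rewritten as a $\min$ over prefix sums.
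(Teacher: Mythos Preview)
Your proposal is correct and follows essentially the same approach as the paper: both arguments hinge on the observation that the consistent paths of length at most $i$ from $u$ are exactly the prefixes of the unique path $P$, and both identify $\max\{-\min_j s_j,0\}$ as the common value. The only cosmetic difference is that the paper splits the argument into two inequalities ($e_{\sigma_i,\tau_i}(u)\ge e^*_{\rm init}(u)$ and $e_{\sigma_i,\tau_i}(u)\le e^*_{\rm init}(u)$), whereas you compute both quantities directly and compare; the underlying content is identical.
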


Now we are ready to show that positional strategies suffice in graphs without negative cycles. For the proof see Appendix~\ref{app:missingproofs}.

\begin{restatable}{lemma}{LmPositional} \label{lem:posstrnonneg}
    Consider a graph with no negative cycles and a game of $i$ rounds. Then for the minimum sufficient energy $e^*_i(u)$ at a vertex $u \in V$,
    it suffices for both players to play positional strategies.
\end{restatable}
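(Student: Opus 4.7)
My plan is to prove Lemma~\ref{lem:posstrnonneg} by induction on $i$, constructing positional strategies directly from the Bellman-style witnesses of Lemma~\ref{lem:optiprop}. The base case $i = 0$ is vacuous, since no moves are ever made.

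For the inductive step, I define a candidate positional strategy $\sigma$ for Alice by setting, for each $v \in V_A$, $\sigma(v)$ to be any neighbor $v'$ realizing $e^*_i(v) + w(v,v') \geq e^*_{i-1}(v')$ from Lemma~\ref{lem:optiprop}; symmetrically, Bob's positional $\tau$ is defined by picking any neighbor realizing the dual inequality (which must in fact hold for all of Bob's neighbors). The substance of the proof is then to verify that these positional strategies attain the value $e^*_i(u)$ from every starting vertex $u$ and against every response of the opponent.

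The core step, and the main obstacle, is that the neighbor witness provided by Lemma~\ref{lem:optiprop} may in principle depend on the number of rounds remaining when the vertex is reached, so a priori a single positional choice could fail at some later visit of the same vertex. The no-negative-cycle hypothesis is exactly what resolves this: whenever a play consistent with the positional strategies revisits a vertex $v$ at two times $t_1 < t_2$, the cycle traversed in between has weight at least $0$, so the accumulated energy at $v$ at time $t_2$ is at least the energy at time $t_1$, while the number of remaining rounds has strictly decreased. Hence the choice that sufficed at the earlier visit, for a longer remaining subgame with lower energy, also suffices at each later visit. Using the alternative drawdown characterization of Lemma~\ref{lem:minenlev}, I would formalize this by contracting any consistent play to its cycle-free core, arguing that contraction cannot increase the drawdown under the no-negative-cycle assumption, and then bounding the drawdown of the contracted play by $e^*_i(u)$ via the Bellman inequalities from Lemma~\ref{lem:optiprop}.

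The symmetric argument for Bob interchanges the roles of $\min$ and $\max$, and Martin's determinacy theorem quoted in Section~\ref{sc:prelim} aligns the two bounds so that $\sigma$ and $\tau$ together realize $e^*_i(u)$. Handling the contraction rigorously against arbitrary opponent responses is the only delicate point; the no-negative-cycle assumption is used in exactly one place, namely to guarantee that each contraction step is drawdown-monotone.
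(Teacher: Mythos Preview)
Your construction of $\sigma$ from the level-$i$ Bellman witnesses does not in general yield an optimal positional strategy for the $i$-round game, even without negative cycles. The witness supplied by Lemma~\ref{lem:optiprop} at a vertex $v$ depends on the level: the neighbor achieving $e^*_i(v) + w(v,v') \geq e^*_{i-1}(v')$ need not achieve $e^*_{i-j}(v) + w(v,v') \geq e^*_{i-j-1}(v')$ when $v$ is first reached after $j>0$ steps, and it is this latter inequality you need to telescope the drawdown bound. Concretely, let all vertices be Alice's and take edges $u \to v$ of weight $0$, $v \to a$ of weight $-1$, $v \to b$ of weight $-2$, $a \to c$ of weight $-3$, $b \to d$ of weight $+10$, with zero-weight self-loops at $c$ and $d$; there are no negative cycles. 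One computes $e^*_1(v)=1$ with witness $a$, $e^*_2(v)=2$ with witness $b$, and $e^*_2(u)=1$. Your $\sigma$ sets $\sigma(v)=b$, so the play $u \to v \to b$ (already cycle-free, so contraction is vacuous) has drawdown $2 > e^*_2(u)$. The Bellman inequalities simply do not telescope along this path, and your revisit argument never applies because $v$ is visited only once.

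The paper avoids this pitfall by not building $\sigma$ from Bellman witnesses at all. It fixes an optimal pair $(\sigma^*_i,\tau^*_i)$, follows the induced play from $u$, and sets $\sigma_i(v)$ to be the move $\sigma^*_i$ makes at the \emph{first} visit to $v$ along that play. The no-negative-cycle hypothesis is then used exactly as you described (energy at any later visit to $v$ dominates the energy at the first visit), but now the ``first visit'' choice is already known to be adequate because it was taken from an optimal play at the correct remaining horizon, not from a Bellman witness at a possibly mismatched level. The Bob side is handled analogously by contracting the minimizing prefix of the optimal play.
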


We use this fact to show that a game of $n$ rounds is equivalent to a game of infinite duration for a game graph without negative cycles. 

\begin{lemma} \label{lem:nonnegntoinf}
    Consider a graph with no negative cycles. Then for each vertex $u \in V$, the minimum sufficient energy needed at $u$ for a game of $n$ rounds,
    is equal to the minimum sufficient energy needed at $u$ for a game of infinite rounds. In other words, $e^*_n(u) = e^*_\infty(u) = e^*(u)$
    for all $u \in V$.
\end{lemma}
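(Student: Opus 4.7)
The plan is to prove both inequalities $e^*_n(u) \leq e^*(u)$ and $e^*(u) \leq e^*_n(u)$. The first inequality is immediate: any strategy $\sigma$ that guarantees nonnegative energy forever in particular guarantees nonnegative energy for the first $n$ rounds, so every sufficient initial energy for the infinite game is sufficient for the $n$-round game, giving $e^*_n(u) \leq e^*(u)$.

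For the reverse inequality, I would exploit that positional strategies suffice on both sides. By Lemma~\ref{lem:posstrnonneg}, Alice has a positional strategy $\sigma^*$ achieving $e^*_n$ in the $n$-round game (viewed as $\sigma^*\colon V_A \to V$, extended trivially to histories). For the infinite game, by the classical positional determinacy result for energy games cited in the preliminaries (\cite{ehrenfeuchtM79,BouyerFLMS08}), it suffices to consider positional strategies of Bob. So fix any positional $\tau\colon V_B \to V$. In the graph $G(\sigma^*,\tau)$ every vertex has exactly one out-neighbor, hence the unique infinite play from $u$ is a \emph{lasso}: a simple stem followed by a simple cycle, of total length at most $n$ edges.

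The key observation is that because $G$ has no negative cycles, the cycle $C$ of this lasso satisfies $w(C) \geq 0$. Therefore, after one full traversal of $C$ the energy level at the entry point of the cycle is at least as high as it was at the start of the traversal, and by induction the energy levels at corresponding positions of subsequent traversals are monotonically nondecreasing. Consequently, the infimum of the energy level over the entire infinite play is already attained within the first lasso traversal, i.e., within the first at most $n$ edges. Now, by the choice of $\sigma^*$ and since $\tau$ is an arbitrary positional Bob strategy, starting the $n$-round game at $u$ with initial energy $e^*_n(u)$ keeps the energy nonnegative throughout those $n$ rounds, using Lemma~\ref{lem:minenlev} to reinterpret $e_{\sigma^*,\tau}(u)$ as the minimum initial energy that keeps the energy levels along the consistent $n$-round path nonnegative. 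Hence with initial energy $e^*_n(u)$ the energy also stays nonnegative along the whole infinite play, which shows $e^*(u) \leq e^*_n(u)$.

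The main obstacle to watch for is the bookkeeping: one must verify that the stem plus one full cycle fits within $n$ edges (which holds since the first repeated vertex must occur by step $n$) and that the reduction to positional play on both sides is legitimate. Once both positional reductions and the nonnegativity of $w(C)$ are in place, combining the two inequalities yields $e^*_n(u) = e^*_\infty(u) = e^*(u)$ for every $u \in V$, as claimed.
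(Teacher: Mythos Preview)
Your proposal is correct and rests on the same key ingredients as the paper: positional determinacy on both sides, the lasso structure of plays under positional strategies, and the fact that a nonnegative cycle forces the minimum prefix weight to be attained within the first $n$ steps. The paper organizes these ideas slightly differently: rather than proving the two inequalities separately, it fixes arbitrary positional $\sigma,\tau$ for the infinite game, observes directly that the minimizing prefix in $e_{\sigma,\tau}(u)=\max\{-\min w(P),0\}$ must be simple (hence of length $\leq n$), and concludes $e_{\sigma,\tau}(u)=e_{\sigma_n,\tau_n}(u)$; then Lemma~\ref{lem:posstrnonneg} turns the positional min--max over $n$-round strategies into $e^*_n(u)$. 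Your two-inequality version with the explicit lasso energy-monotonicity argument is a bit more hands-on but equally valid.
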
 
%\textbf{Unfortunately the lemma cannot be generalized.
%Consider the graph a->b(-1), b->a(-1), a->c(-W), b->c(-W), c->d(0), d->c(0). The notation u->v(w) means directed edge from u to v with weight w.
%Assume that all the vertices belong to Alice. Then Alice wins the game at every vertex of the graph.
%Consider the optimal play at vertex $a$ assuming that $|W| > n$.
%The optimal play for $n$ rounds at $a$ is to follow the cycle a->b with initial energy equal to $n$.
%The optimal play for infinite rounds at $a$ is to follow the cycle c->d with initial energy equal to $W$.}
\begin{proof}
    %By definition it is true that: 
    %$$e^*_n(u) = \min_{\sigma_n} \max_{\tau_n} \max\{-\min_{|P| \leq n} w(P), 0\} \leq \min_{\sigma} \max_{\tau} \max\{-\min w(Q), 0\} = e^*(u)$$
    %where $\sigma, \tau$ behave as $\sigma_n, \tau_n$ in the first $n$ rounds, and $P, Q$ are the corresponding consistent paths.
    %Hence it remains to show that $e^*_n(u) \geq e^*(u)$.
    Let $\sigma$ and $\tau$ be two arbitrary positional strategies for the infinite duration game. By definition, we have that 
    $e_{\sigma, \tau}(u) = \max\{-\min w(P), 0\}$, where the minimization is over all the consistent paths with respect to $\sigma$ and $\tau$ 
    originating at $u$. Since the graph contains only nonnegative cycles and the strategies are positional, the path that minimizes the relation is a simple path,
    and so, its length is at most $n$. Hence it follows that $e_{\sigma, \tau}(u) = \max\{-\min_{|P| \leq n} w(P), 0\}$. In turn, this is equivalent to using positional strategies
    for a game of $n$ rounds. Hence it holds that $e_{\sigma, \tau}(u) = e_{\sigma_n, \tau_n}(u)$, where $\sigma_n$ and $\tau_n$
    are the strategies $\sigma$ and $\tau$ respectively, restricted to the first $n$ rounds. This implies that
    $e^*(u) = \min_{\sigma_n} \max_{\tau_n} e_{\sigma_n, \tau_n}(u)$, where $\sigma_n$ and $\tau_n$ are positional strategies for a game of $n$ rounds.
    By Lemma~\ref{lem:posstrnonneg}, this equals $e_n^*(u)$ and the claim follows.
    %and so $e^*_n(u) = \min_{\sigma_n} \max_{\tau_n'} e_{\sigma_n, \tau_n'}(u)$. Finally by restricting to positional strategies $\tau_n$ for Bob,
    %we get that $e^*_n(u) \geq \min_{\sigma_n} \max_{\tau_n} e_{\sigma_n, \tau_n}(u) = e^*(u)$, and the claim follows.
\end{proof}

Together, Lemma~\ref{lem:invalg} and Lemma~\ref{lem:nonnegntoinf} prove Theorem~\ref{thm:nonnegcycles}.%, which we restate for convenience. 

%\ThmNoNegCycles*
% \begin{theorem} \label{th:nonneg_cycle_iter}
%     Consider a graph with no negative cycles. Then the running time of the value iteration algorithm is $O(mn)$.
% \end{theorem}
% \begin{proof}
%     By Lemma~\ref{lem:nonneg_n_to_inf} we have that $e^*(u) = e^*_n(u)$. Lemma~\ref{lem:inv_alg} shows that Algorithm~\ref{alg:val_iter_alg} outputs $ e_n(u)=e^*_n(u)$ in $n$ steps, which takes $O(mn)$ time. 
% \end{proof}

 \section{Conclusion}
In this paper, we presented algorithms for three special cases of energy games. For the first and second result, regarding games with a single player, we demonstrated a relation between shortest path computation and energy games. These relations are interesting by themselves since they reveal more about the nature of energy games.

In particular for the all-Alice case, we make the connection to the All-Pairs Nonnegative Prefix Paths problem explicit. We show that under the APSP conjecture there is no $O(n^{3-\epsilon})$ time algorithm for the reachability version of this problem. Since this problem is very similar to the all-Alice case, it might be a hint that also the all-Alice case allows for no $O(n^{3-\epsilon})$ time algorithm.

For the all-Bob case, we essentially solve the problem by giving a near-linear time algorithm. We do this by exploiting the recent progress on the negative-weight single-source shortest paths problem.

For the case with no negative cycles, we introduce a variant of the value iteration algorithm which takes $O(mn)$ time, improving the algorithm of \cite{ChatterjeeHKN14} for this case by a $\log W$ factor. Similar to Bellman-Ford, it seems that directly improving the iteration count of a value iteration algorithm below $n$ is hard -- if not impossible. It would be interesting to see a faster algorithm for graphs without negative cycles, which most probably will (also) use different ideas.

%\subsection*{Acknowledgements}

 \printbibliography[heading=bibintoc] % Make bibliography show up in table of contents

%\listoftodos

\appendix

\section{Missing Proofs}\label{app:missingproofs}
\LmAlice*
\begin{proof} 
    First we construct the graph $G_t=(V_t, E_t)$ by deleting all the outgoing edges of $Z$, and contracting $Z$ to a single target vertex $t$. Now we show that the problem reduces to computing the maximum distance\footnote{We define the \emph{maximum distance} between $v$ and $t$ as the maximum weight of any path from $v$ to $t$. Note that in graphs without positive cycles this is either a finite value, or $\infty$ if there is no $s-t$ path.} 
    from each vertex $v$ to $t$ in $G_t$, denoted by $\delta(v,t)$. 
    First note that if $v$ can reach $t$ in $G_t$, then $e^*(v)$ is finite and the answer is obtained from the distance computation. If $v$ cannot reach $t$ in $G_t$, then we must have $e^*(v)=\infty$, because in different case $v$ could reach $t$ in
    $G_t$ (see Lemma~\ref{lem:nonegcyclehasnonnegpref}).
    Thus we assume that $v$ can reach $t$ in $G_t$, and we prove the lemma in two claims. 
    % Also, notice that 
    % if the target vertex $t$ is not reachable from a vertex $v$ in $G_t$ then
    % $e^*(v) = \infty$, since otherwise either $v \in Z$ or
    % $v$ can reach $t$ in $G_t$ (see Lemma~\ref{lem:nonegcyclehasnonnegpref} and Lemma~\ref{lm:Alice0}). 
    % Thus, in the rest we assume that every vertex
    % can reach the target vertex $t$ in $G_t$.
    % \antonis{This is new: check correctness.}
    
    \textbf{Claim 1.} $\delta(v,t)<0$ for all $v\neq t$. \\
    Suppose to the contrary that there exists a vertex $v$ such that $\delta(v,t)\geq 0$, and let $P$ be a path from $v$ to $t$ of weight $\delta(v,t)$ (i.e., $w(P) = \delta(v, t)$). Denote $u$ for the last vertex on $P \setminus \{t\}$ such that
    $\delta(u,t)\geq 0$. Note that such $u$ always exists, since $v$ itself is a valid candidate. Denote $u=u_0, u_1, \dots, u_{k-1},u_k = t$ for the subpath of $P$ from $u$ to $t$. If $\sum_{j=0}^{i-1} w(u_j, u_{j+1}) \leq 0$ for some $i \in \{1, \dots, k-1\}$, then 
    $\delta(u_i,t) \geq \sum_{j=i}^{k-1} w(u_j, u_{j+1}) = \delta(u,t) - \sum_{j=0}^{i-1} w(u_j, u_{j+1}) \geq \delta(u,t) \geq 0$, which contradicts the fact that $u$ is the last vertex on $P \setminus \{t\}$
    with $\delta(u,t)\geq 0$. Hence we have that $\sum_{j=0}^{i-1} w(u_j, u_{j+1}) > 0$ for all $i \in \{1, \dots, k-1\}$ and together with $\delta(u,t) \geq 0$, we can conclude that the path $u=u_0, u_1, \dots, u_{k-1},u_k = t$
    has only nonnegative prefixes.
    This implies that $e^*(u) = 0$, and based on Lemma~\ref{lm:Alice0} 
    we have that $u \in Z$. However, this is a contradiction on the fact that $u \neq t$, and so, we conclude that $\delta(v,t)<0$ for all $v\neq t$. 
    
    \textbf{Claim 2.} $e^*(v)=-\delta(v,t)$.\\
    By the definition of $e^*(\cdot)$ there is a path of weight at least $-e^*(v)$ from $v$ to $t$, so we have $\delta(v,t) \geq -e^*(v)$, or equivalently $e^*(v)\geq -\delta(v,t)$. 
    For the other inequality, suppose for contradiction that $e^*(v) > -\delta(v,t)$ and let $P$ be a path from $v$ to $t$ of weight $\delta(v,t)$. By Claim 1, we have that $-\delta(v,t) > 0$, hence $e^*(v)>0$. Now by the definition of $e^*(\cdot)$ and $t$, since $e^*(v)$ does not use $P$ as a certificate,    
    there must exist a prefix
    $u_0, \dots, u_i$ of $P \setminus \{t\}$ such that $\sum_{j=0}^{i-1} w(u_j, u_{j+1}) < \delta(v,t)$. However, we know that the following is true:
    $$\delta(v, t) = \sum_{j=0}^{i-1} w(u_j, u_{j+1}) + w\Bigl(P \setminus \bigcup_{j=0}^{i-1}u_j\Bigr) \leq \sum_{j=0}^{i-1} w(u_j, u_{j+1}) + \delta(u_i,t) < \sum_{j=0}^{i-1} w(u_j, u_{j+1}),$$ where the last inequality comes from Claim 1.
    This yields a contradiction, and so the claim follows.
    
    \textbf{Algorithm.} We note that computing maximum distances from $v$ to $t$ in $G_t=(V_t,E_t)$, is the same as computing maximum distances from $t$ to $v$ in $G'_t=(V_t,E'_t)$, where $E'_t = \{(u,v) : (v,u)\in E_t\}$ is the graph with all edges reversed. Computing maximum distances in $(G'_t,w)$ is equivalent to computing minimum distances in $(G'_t,-w)$ and negating the output.
    
    Finally, remark that there are no positive cycles
    in $(G_t, w)$ or $(G'_t, w)$. This can be justified as follows. By Lemma~\ref{lem:nonegcyclehasnonnegpref},
    in every positive cycle in $G_t$ there is a vertex $u$ such that $e^*(u) = 0$,
    which in turn by Lemma~\ref{lm:Alice0} implies that $u \in Z$.
    Hence, as we construct $G_t$ by deleting all the outgoing edges of $Z$ and contracting $Z$ to $t$,
    there are no negative cycles in $(G'_t,-w)$.
    Therefore, we conclude that we can compute $\delta(\cdot,t)$, and hence $e^*(\cdot)$, by computing single-source shortest paths on $(G'_t,-w)$. 
\end{proof}

\LmOptiChar*
\begin{proof}
Throughout this lemma we write $P_x$ for a path consistent with the relevant strategies starting at $x$. 
Consider a vertex $u\in V_A$. First of all, we notice that Property~\ref{eq:Alice} is equivalent to
$$e^*_i(u) \geq \min_{v\in N^+(u)} e^*_{i-1}(v) - w(u,v).$$
By definition we have that:
\begin{align*}
    e^*_i(u) &= \min_{\sigma_i}\max_{\tau_i} \max\{-\min_{|P_u|\leq i} w(P_u),0\} \\
    &= \min_{v\in N^+(u)} \min_{\sigma_{i-1}}\max_{\tau_{i-1}} \max\{-w(u, v) - \min_{0 \leq |P_v|\leq i-1} w(P_v),0\},
\end{align*}
where the last equality holds by writing out the different options for a strategy of Alice at $u$. Notice that we allow $P_v$ in the minimization to be empty,
which means that $-\min_{0 \leq |P_v|\leq i-1} w(P_v)\geq 0$. Now we distinguish two cases: 
If $-w(u,v)-\min_{0\leq|P_v|\leq i-1} w(P_v)\geq 0$, then we get 
\begin{align*}
    \max\{-w(u,v)-\min_{0\leq|P_v|\leq i-1} w(P_v),0\} &= -w(u,v)-\min_{0\leq|P_v|\leq i-1} w(P_v) \\
    &= -w(u,v)+\max\{-\min_{0\leq|P_v|\leq i-1} w(P_v),0\}.
\end{align*}
If $-w(u,v)-\min_{0\leq|P_v|\leq i-1} w(P_v)\leq 0$, then we get 
\begin{align*}
    \max\{-w(u,v)-\min_{0\leq|P_v|\leq i-1} w(P_v),0\} = 0 \geq -w(u,v)+\max\{-\min_{0\leq|P_v|\leq i-1} w(P_v),0\}.
\end{align*}
Therefore it follows that
\begin{align*}
    e^*_i(u) &\geq \min_{v\in N^+(u)} \min_{\sigma_{i-1}}\max_{\tau_{i-1}} \max\{-\min_{0\leq|P_v|\leq i-1} w(P_v),0\}-w(u,v)\\
    &= \min_{v\in N^+(u)} e^*_{i-1}(v)-w(u,v).
\end{align*}
Consider now a vertex $u\in V_B$. Similarly, we notice that Property~\ref{eq:Bob} is equivalent to
$$ e^*_i(u) \geq \max_{v\in N^+(u)} e^*_{i-1}(v)-w(u,v).$$
Following the same argument as before, we obtain
\begin{align*}
    e^*_i(u) &= \max_{\tau_i}\min_{\sigma_i} \max\{-\min_{|P_u|\leq i} w(P_u),0\} \\
    &= \max_{v\in N^+(u)} \max_{\tau_{i-1}}\min_{\sigma_{i-1}} \max\{-w(u, v) -\min_{0\leq|P_v|\leq i-1} w(P_v),0\}\\
    &\geq \max_{v\in N^+(u)} \max_{\tau_{i-1}}\min_{\sigma_{i-1}} \max\{-\min_{0\leq|P_v|\leq i-1} w(P_v),0\}-w(u,v)\\
    &= \max_{v\in N^+(u)}e^*_{i-1}(v)-w(u,v).\qedhere
\end{align*}
\end{proof}

\LmMinenLev*
\begin{proof}
    To show that $e_{\sigma_i, \tau_i}(u) \geq e^*_{\text{init}}(u)$, observe that with initially energy $e_{\sigma_i, \tau_i}(u)$, the energy level at each vertex of $P$ is nonnegative. 
    If this is not the case, then there must exist a prefix $P'$ of $P$ such that $e_{\sigma_i, \tau_i}(u) + w(P') < 0$, which yields a contradiction on the definition of $e_{\sigma_i, \tau_i}(u)$.
    It remains to show that $e_{\sigma_i, \tau_i}(u) \leq e^*_{\text{init}}(u)$. If $e_{\sigma_i, \tau_i}(u) = 0$, then trivially $e_{\sigma_i, \tau_i}(u) \leq e^*_{\text{init}}(u)$. Otherwise let $P'$ be the prefix of $P$ 
    that minimizes the value of $e_{\sigma_i, \tau_i}(u)$. Since for the value $e^*_{\text{init}}(u)$ it holds that $e^*_{\text{init}}(u) + w(H) \geq 0$ for every prefix $H$ of $P$, this implies that $e^*_{\text{init}} (u)
    \geq -w(P') = e_{\sigma_i, \tau_i}(u)$, and so the claim follows.
\end{proof}

\LmPositional*
\begin{proof}
    At first, we show the claim for Alice.
    Consider a vertex $u \in V$ and let $\sigma^*_i$ and $\tau^*_i$ be two optimal strategies, namely $e^*_i(u) = e_{\sigma^*_i, \tau^*_i}(u)$. Define $\sigma_i$
    to be the corresponding positional strategy with respect to $\sigma^*_i$ in the following way. Along the consistent path
    $u=u_0\cdots u_j \cdots u_i$ with respect to $\sigma^*_i$ and $\tau^*_i$, whenever we are in a vertex $u_j \in V_A$ for the first time we set
    $\sigma_i(u_j) = \sigma^*_i(u_0\cdots u_j)$. The claim is that $\sigma_i$ is also optimal, that is, $e_{\sigma_i, \tau_i}(u) \leq e^*_i(u)$
    for any strategy $\tau_i$. Along the consistent path with respect to $\sigma^*_i$ and $\tau^*_i$, let $v$ be a vertex that is visited twice,
    $P_1$ be the subpath from $u_0$ to $v$ when $v$ is visited for the first time, and $P_2$ be the path from $u_0$ to $v$ when $v$ is visited again.
    Since the graph contains only nonnegative cycles, the weight of $P_2$ cannot be smaller than the weight of $P_1$ and the energy level
    at $v$ in $P_2$ is not smaller than the energy level at $v$ in $P_1$. Together with Lemma~\ref{lem:minenlev}, this implies that with initial energy $e_{\sigma^*_i, \tau^*_i}(u)$, Alice can safely 
    move from $v$ to $\sigma_i(v)$ the next time as well.
    Hence with the strategies $\sigma_i$, $\tau^*_i$ and with initial energy $e_{\sigma^*_i, \tau^*_i}(u)$, 
    the energy level at each vertex of the corresponding consistent path (the path with respect to $\sigma_i$ and $\tau^*_i$ of length $i$ originating at $u$)
    is nonnegative. Therefore by Lemma~\ref{lem:minenlev}, it holds that $e_{\sigma_i, \tau^*_i}(u) \leq e_{\sigma^*_i, \tau^*_i}(u)$, and the claim follows.
    
    We now show the claim for Bob as well. Consider a vertex $u \in V$ and let $\sigma^*_i$ and $\tau^*_i$ be two optimal strategies, namely $e^*_i(u) = e_{\sigma^*_i, \tau^*_i}(u)$.
    By definition $e_{\sigma^*_i, \tau^*_i}(u) = \max\{-\min_j w(P_j), 0\}$, where $P_j$ is the consistent path with respect to $\sigma^*_i$ and $\tau_i$ of length $j \leq i$ originating at $u$.
    We denote by $P$ the path that minimizes the relation, and we distinguish two cases. The first case is when Bob's strategy is positional at each vertex of $P$, and trivially we can replace
    $\tau^*_i$ with the corresponding positional strategy. For the second case, suppose that there exists a vertex $v \in P$ where Bob chooses a different neighbor $v'$ than the one selected the first time. 
    Since the graph contains only nonnegative cycles, if Bob would have moved to $v'$ the first time, the weight of $P$ without the cycle would not exceed the weight of $P$. Therefore we can replace
    $\tau^*_i$ with a positional strategy in this case as well.
\end{proof}

\end{document}